\newtheorem{theorem}{\bf Theorem}
\newtheorem{problem}{\bf Problem}
\newtheorem{remark}{\bf Remark}
\newtheorem{definition}{\bf Definition}
\newtheorem{assumption}{\bf Assumption}
\begin{document}
\title{Path Integral Methods for Synthesizing and Preventing Stealthy Attacks in Nonlinear Cyber-Physical Systems}
\author{Apurva Patil, Kyle Morgenstein, Luis Sentis, and Takashi Tanaka
\thanks{A. Patil is with the Walker Department of Mechanical Engineering, University of Texas at Austin, Austin, TX 78712 USA (e-mail: apurvapatil@utexas.edu). K. Morgenstein and L. Sentis are with the Department of Aerospace Engineering and Engineering Mechanics at the University of Texas at Austin, Austin, TX 78712 USA (e-mail: kylem@utexas.edu, lsentis@utexas.edu). T. Tanaka is with Aeronautics and Astronautics, and of Electrical and Computer Engineering departments at Purdue University, West Lafayette, IN 47907 USA (e-mail: tanaka16@purdue.edu).}
}

\maketitle

\begin{abstract}
This paper studies the synthesis and mitigation of stealthy attacks in nonlinear cyber-physical systems (CPS). To quantify stealthiness, we employ the Kullback–Leibler (KL) divergence, a measure rooted in hypothesis testing and detection theory, which captures the trade-off between an attacker’s desire to remain stealthy and her goal of degrading system performance. First, we synthesize the worst-case stealthy attack in nonlinear CPS using the path integral approach. Second, we consider how a controller can mitigate the impact of such stealthy attacks by formulating a minimax KL control problem, yielding a zero-sum game between the attacker and the controller. Again, we leverage a path integral–based solution that computes saddle-point policies for both players through Monte Carlo simulations. We validate our approach using unicycle navigation and cruise control problems, demonstrating how an attacker can covertly drive the system into unsafe regions, and how the controller can adapt her policy to combat the worst-case attacks. 
\end{abstract}

\begin{IEEEkeywords}
Stealthy attack, Continuous time systems, Nonlinear cyber-physical systems, Kullback-Leibler (KL) control, Path integral control. 
\end{IEEEkeywords}

\section{Introduction}
\label{sec:introduction}

The work presented in this paper is motivated by the need to protect cyber-physical systems (CPS) from stealthy attacks. As the next generation of engineered systems, CPS integrate computation, communication, control, and physical processes in a tightly interconnected manner \cite{kim2012cyber}. Because of this close integration across diverse technologies, CPS are susceptible to adversarial intrusions that can lead to severe ramifications for national economies, public safety, and even human lives \cite{poovendran2011special}. Notable cyber attacks—such as the StuxNet malware \cite{karnouskos2011stuxnet} and the Maroochy water bleach incident \cite{slay2007lessons}—underscore the critical importance of ensuring robust security measures for the safe operation of CPS.\par

With the increasing adoption of CPS, attack strategy and defense mechanism design have received considerable attention in the literature. In this paper, we consider worst-case attack synthesis and its mitigation problems for nonlinear continuous-time CPS and propose their solutions using the path integral control method. 
Specifically, we consider a scenario in which a controller operates a CPS in a stochastic environment while being vulnerable to potential adversarial intrusion. A stealthy attacker hijacks the control input and injects malicious signals that are carefully designed to resemble natural disturbances—such as sensor noise or environmental fluctuations—thereby evading detection.
Meanwhile, the controller continuously monitors the system to distinguish the attack signals from natural background noises by designing an appropriate detection test. However, knowing that the system is continuously monitored, a rational attacker will conduct a covert attack, maximizing the attack’s impact while avoiding detection. Intuitively, there is a tradeoff between the performance degradation an attacker can induce and how easy it is to detect the attack \cite{teixeira2012attack}. We solve two key problems in this paper. First, we compute the worst-case stealthy attack signal assuming that the controller's policy is fixed and is known to the attacker. Next, we design the controller's policy to mitigate the risk of stealthy attacks, which will serve as the physical watermarking policy that countermeasures the potential stealthy attacks.\par

We design a stealthy attack synthesis problem motivated by hypothesis testing. Despite recent progress, existing applications of hypothesis testing frameworks to control systems \cite{bai2017data,kung2016performance,guo2018worst, shang2021worst} are restricted to discrete-time linear-Gaussian settings. The assumption of linear systems driven by Gaussian noises provides a significant advantage in designing and analyzing the statistical tests for detecting an attack, and the removal of such assumptions is marked as an open problem in \cite{sandberg2022secure}. Invoking Stein’s lemma, Bai et al. \cite{bai2017data} introduced the notion of $\epsilon$-stealthiness as measured by relative entropy. The worst-case degradation of linear control systems attainable by $\epsilon$-stealthy attacks was studied in \cite{guo2018worst, shang2021worst}. In this paper, we generalize this analysis to the class of nonlinear continuous-time dynamical systems. We then propose the path integral approach for the synthesis of stealthy attack policies. Our preliminary result \cite{patil2023simulator} demonstrates the feasibility of such a path-integral-based attack synthesis in a discrete-time setting.\par

Next, we consider the minimax formulation in which the controller is also allowed to inject a control input into the system to mitigate the attack impact. Several mitigation strategies against stealthy attacks have been proposed in the literature \cite{sandberg2022secure}. The trade-off between control performance and system security was investigated under a stochastic game framework in \cite{miao2013stochastic}. The minimax games between the attacker and the controller have been studied by many authors in the systems and control community.  By formulating a zero-sum game, Zhang and Venkitasubramaniam [8] studied false data injection and detection problems
in infinite-horizon linear-quadratic-Gaussian systems. The works \cite{bai2017data, guo2018worst, shang2021worst} adopted
the hypothesis testing theory to characterize covert false data injection attacks against control systems. Despite recent progress, existing applications of hypothesis testing frameworks to control systems are limited to linear discrete-time settings. The goal of this paper is to broaden the scope of the literature by formulating the aforementioned minimax game in continuous time for nonlinear cyber-physical systems. One effort in this direction is presented in one of our earlier works \cite{tanaka2024covert}. 


The conventional approaches for attack synthesis and mitigation necessitate explicit models of physical systems. However, in increasingly common scenarios, physical models are represented by various forms of ``digital twins," such as trained neural ODEs, which are easy to simulate but are not necessarily easy to express as models in the classical sense. This has elevated the importance of simulator-based CPS interaction, where the agent can directly use real-time simulation data to assist her decision-making without needing to construct an explicit model. In this paper, we use a specific type of simulator-based control scheme known as the \textit{path integral control method}. Path integral control is a sampling-based algorithm employed to solve nonlinear stochastic optimal control problems numerically ~\cite{kappen2005path, theodorou2010generalized, williams2016aggressive}. It allows the policy designer to compute the optimal control inputs online using Monte Carlo samples of system paths. Such an algorithm was pioneered by Kappen \cite{kappen2005path} and has been generalized in the robotics and machine-learning literature \cite{theodorou2010generalized,williams2016aggressive}. The Monte Carlo simulations can be massively parallelized on GPUs, and thus the path integral approach is less susceptible to the curse of dimensionality \cite{williams2017model}. \par

The contributions of this paper are as follows:
\subsection*{Contributions}
\begin{enumerate}
\item Kullback–Leibler (KL) control formulation for stealthy attack synthesis: We propose a KL control framework that models the trade-off between attack impact and detectability (Problem \ref{prob: KL}). This formulation is subsequently transformed into an equivalent quadratic-cost stochastic optimal control (SOC) problem  (Theorem \ref{thm:prob1_1}), enabling the synthesis of worst-case stealthy attacks.

\item Path-integral-based stealthy attack synthesis: We develop a novel path-integral-based method to synthesize worst-case stealthy attacks in real time for nonlinear continuous-time systems (Theorem \ref{Theorem: sol of SOC}). Notably, the proposed approach does not require an explicit system model or an explicit policy synthesis.

\item Zero-sum game formulation for attack mitigation: In order to mitigate the risk of stealthy attacks, we propose a novel zero-sum game formulation to model the competition between the attacker and the controller (Problem \ref{prob: minimax_KL}). We further show that this formulation is equivalent to both a risk-sensitive control problem and an H$_\infty$ control problem (Problems \ref{prob: risk-sensitive control} and \ref{prob: game}, respectively).  

\item Path-integral-based attack mitigation:  We provide a path-integral-based approach that allows the controller to synthesize the attack mitigating control inputs online, without relying on explicit models or policy synthesis (Theorems \ref{thm: risk-sensitive control} and \ref{thm: two-player game}).


\end{enumerate}



\section{Preliminaries}
\label{sec:formulation}

\subsection{Notation}
Given two probability measures $P$ and $Q$ on a measurable space $(\Omega, \mathcal{F})$, the Kullback-Leibler (KL) divergence from $Q$ to $P$ is defined as
$D(P\|Q)=\int_\Omega \log\frac{dP}{dQ}(\omega)P(d\omega)$
if the Radon-Nikodym derivative $\frac{dP}{dQ}$ exists, and $D(P\|Q)=+\infty$ otherwise. $\mathcal{N}(\mu, \Sigma)$ represents the Gaussian distribution with mean $\mu$ and covariance $\Sigma$.
Throughout this paper, we use the natural logarithm.

\subsection{System Setup}
{\color{black}
All the random processes considered in this paper are defined on the probability space $(\Omega, \mathcal{F}, P)$. Let $w_t$ be an $m$-dimensional standard Brownian motion with respect to the probability measure $P$, and let $\mathcal{F}_t\subset \mathcal{F}$ be an increasing family of $\sigma$-algebras such that $w_t$ is a martingale with respect to $\mathcal{F}_t$. In the sequel, $w_t$ will be used to model the natural disturbance. Consider the class of continuous-time nonlinear cyber-physical systems (shown in Figure~\ref{fig:cps_security})  with state $x_t\in\mathbb{R}^n$, control input $u_t\in\mathbb{R}^\ell$, and the input random process $v_t\in\mathbb{R}^m$:
\begin{equation}\label{SDE}
dx_t=f_t(x_t) dt +g_t(x_t)u_tdt+h_t(x_t)dv_t.
\end{equation}
Under the nominal operation condition, we assume $v_t=w_t$ for all $0\leq t\leq T$, i.e.,  the natural disturbance directly enters into the system.
In this paper, we are concerned with the competition between the controller agent (showing blue in Figure~\ref{fig:cps_security}) and the attacker agent (showing red in Figure~\ref{fig:cps_security}) over the dynamical system \eqref{SDE}.

\subsubsection{Controller}
The controller has a legitimate authority to operate the system \eqref{SDE}. Her primary role is to compute the control input $u_t$ to minimize the expected cost of the system operation, which is denoted by $\mathbb{E}\left[\int_0^T c_t(x_t, u_t)dt\right]$ with some measurable functions $c(\cdot,\cdot)$. We assume that the controller applies a state feedback policy. Specifically, we impose a restriction $u \in \mathcal{U}$ where
\[
\mathcal{U}=\{u: \text{$u_t$ is $\mathcal{F}_t^x$-adapted It\^o process}\}
\]
and  $\mathcal{F}_t^x=\sigma(\{x_s:0\leq s\leq t\})\subset \mathcal{F}$ is a sigma algebra generated by the state random process $x_s, 0\leq s\leq t$.

The secondary role of the controller is to detect the presence of the attacker, who is able to alter the disturbance process $v_t$. We assume that the controller is able to monitor $v_t$. When an anomaly is found, the controller has the authority to trigger an alarm and halt the system's operation. We assume that the controller agent is always present.

\subsubsection{Attacker} Unlike the controller, the attacker agent may or may not be present. When the attacker is absent (or inactive), the system \eqref{SDE} is under the nominal operating condition, i.e., $v_t=w_t$. When the attacker is present, she is allowed to inject a synthetic attack signal $v_t\neq w_t$. The class of admissible attack signals $\mathcal{V}$ is specified below:
\begin{definition}
\label{def:admissible_attack}
The attack signal $v_t\in\mathbb{R}^m, 0\leq t\leq T$ is in the admissible class $\mathcal{V}$ if it is an It\^o process of the form 
\begin{equation}
\label{eq:admissible_v}
dv_t=\theta_t(\omega)dt+dw_t, \; v_0=0
\end{equation}
where $\theta_t(\omega)\in\mathcal{R}^m$ satisfies the following conditions:
\begin{itemize}
\item[(i)] $\theta: [0,T]\times\Omega \rightarrow \mathbb{R}^m$ is $\mathcal{B}[0,T]\times \mathcal{F}$ measurable where $\mathcal{B}$ is the Borel $\sigma$-algebra;
\item[(ii)] $\theta_t(\omega)$ is $\mathcal{F}_t$-adapted; and
\item[(iii)] $P(\omega\in\Omega: \int_0^t |\theta_s(\omega)|ds<\infty \; \forall t \in [0,T])=1$.
\end{itemize}
\end{definition}
\begin{remark}
The admissible class $\mathcal{V}$ of attack signals allows the attacker to inject a time-varying, possibly randomized, bias term $\theta_t(\omega)\in\mathcal{R}^m$ to the disturbance input. The progressive measurability conditions (i) and (ii), and the $L_1$ integrability condition (iii) ensure that \eqref{eq:admissible_v} is a well-defined It\^o process. 
\end{remark}

The objective of the attacker is twofold.
First, she tries to maximize the expected cost $\mathbb{E}\left[\int_0^T c_t(x_t, u_t)dt\right]$ of the system operation. This can be achieved by altering the statistics of $v_t$ from those of the natural disturbance $w_t$. 
However, a significant change in the disturbance statistics will increase the chance of being detected. Therefore, the second objective of the attacker is to stay ``stealthy" by choosing $v_t$ that is statistically similar to $w_t$.
\begin{remark}
One can consider more general attack signals of the form
\begin{equation}
\label{eq:not_admissible_v}
dv_t=\theta_t(\omega)dt+\sigma_t(\omega)dw_t, \; v_0=0
\end{equation}
with a diffusion coefficient $\sigma_t(\omega)\neq 1$.
However, from the attacker's perspective, there's no benefit in such a generalization because the choice $\sigma_t(\omega)\neq 1$ only ``hurts" the stealthiness of the attack signal.
To see this, it is sufficient to consider the following two hypotheses: 
\begin{align*}
&H_0:  dv_t=dw_t \text{ (No attack)} \\
&H_1:  dv_t=\sigma dw_t, \sigma\neq 1 \text{ (Attack on the diffusion coefficient)} 
\end{align*}
and whether the detector can tell which model has generated an observed continuous sample path $v\in\mathcal{C}[0,T]$. It can be shown that there exists a hypothesis test $\phi: \mathcal{C}[0,T] \rightarrow \{H_0, H_1\}$ that returns a correct result with probability one (See Appendix). Therefore, we can restrict ourselves to the admissible class $\mathcal{V}$ we defined as in Definition~\ref{def:admissible_attack} without loss of generality.
\end{remark}
}

\begin{figure}[!tbp]
\centering
\includegraphics[scale=0.43]{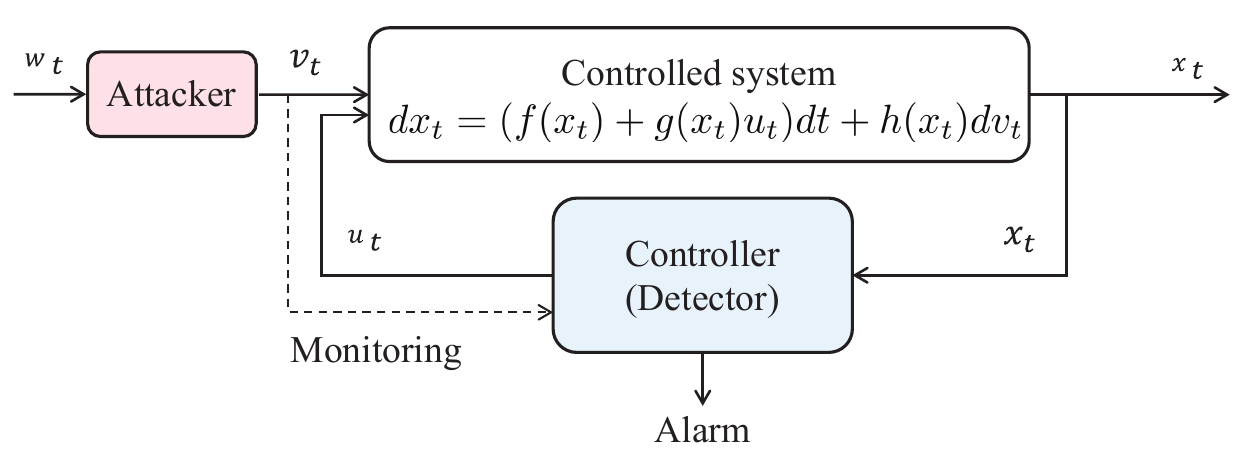}
\caption{Attacker vs controller/detector.}
\vspace{-5pt}
\label{fig:cps_security}
\end{figure}

\subsection{KL Divergence as a Stealthiness Measure}

{\color{black}
By controlling the drift term $\theta_t(\omega)$ in \eqref{eq:admissible_v}, the attacker can alter the distribution of the attack signal $v_t$.
To deal with the distributions in the space of continuous time paths, it is convenient to discuss it from the perspective of the change of measures. 
Even though $v_t$ in \eqref{eq:admissible_v} is not the standard Brownian motion under $P$ (unless $\theta_t(\omega)\equiv 0$), by Girsanov's theorem \cite{oksendal2013stochastic}, there always exists an alternative measure $Q$ in which $v_t$ is the standard Brownian motion. Girsanov's theorem also guarantees that such a measure $Q$ is equivalent to $P$ and hence the Radon-Nikodym derivative $\frac{dP}{dQ}(\omega)$ exists.

If an attack policy is fixed, then the controller's task is to determine whether an observed signal $v_t$ is generated by the natural disturbance, i.e.,
\begin{equation}
H_0: dv_t=dw_t
\end{equation}
or it is a synthetic attack signal, i.e.,
\begin{equation}
H_1: dv_t=\theta_t(\omega)dt+dw_t.
\end{equation}
This is a binary hypothesis testing problem.
Let $A\in \mathcal{F}$ be an event in which the controller triggers the alarm.
The quality of a hypothesis test (a choice of $A$) is usually evaluated in terms of the probability of type-I error (false alarm) and the probability of type-II error (failure of detection). Using notations introduced so far, the probability of type-I error can be expressed as $Q(A)$, whereas the probability of type-II error can be expressed as $P(A^c)$. 

It is well known from the Neyman-Pearson Lemma that the optimal trade-off between the type-I and type-II errors can be attained by a threshold-based likelihood ratio test. That is, for a fixed type-I error probability, the region $A$ that minimizes the type-II error probability is given by 
\begin{equation}
A=\left\{\omega\in\Omega: \log\left(\frac{dP}{dQ}(\omega)\right)\geq \tau\right\}
\end{equation}
where $\tau$ is an appropriately chosen threshold \cite{cvitanic2001generalized}.

Assuming that the controller adopts the Neyman-Pearson test (which is the most pessimistic assumption for the attacker), the attacker tries to decide on an attack policy that is difficult to detect. 
Unfortunately, for each attack policy $v_t$, it is generally difficult to compute type-I and type-II errors attainable by the Neyman-Pearson test analytically.
This poses a significant challenge in studying the attacker's policy.
Therefore, in this paper, we adopt an information-theoretic surrogate function that is more amenable to optimization -- namely, the KL divergence $D(P\|Q)=\mathbb{E}^P\log\frac{dP}{dQ}$ -- that serves as a ``stealthiness" measure.

The KL divergence has been adopted as a stealthiness measure in prior studies (e.g., \cite{bai2017data}), which can be justified by the following arguments:
The first argument is based on non-asymptotic inequalities (i.e., the inequalities that hold for any finite horizon lengths $T$). 
It follows from Pinsker's inequality and Bretagnolle-Huber inequality \cite[Theorem 14.2]{lattimore2020bandit} that
\begin{align}
Q(A)+P(A^c)&\geq 1-\sqrt{\frac{1}{2}D(P\|Q)} \label{eq:pinsker} \\
Q(A)+P(A^c)&\geq \frac{1}{2}\exp\left(-D(P\|Q)\right).
\end{align} 
These inequalities suggest that the attacker can enhance her stealthiness by choosing a policy that makes $D(P\|Q)$ small -- see Figure~\ref{fig:error_bounds}.
\begin{figure}[h]
\centering
\includegraphics[width=0.8\columnwidth]{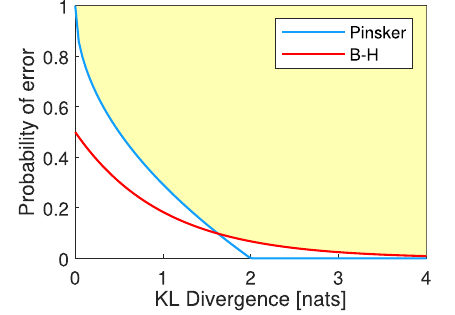}
\caption{Error bounds and achievable region (shown in yellow).}
\vspace{-5pt}
\label{fig:error_bounds}
\end{figure}\par
The second argument is asymptotic in the sense that it is concerned with how type-I and type-II errors behave as $T\rightarrow \infty$. 
The Chernoff-Stein lemma \cite{cover1999elements} states that when the type-I error probability is constrained, the exponent of the type-II error under the Neyman-Pearson test behaves as
\begin{equation}
\label{eq:error_asymptotes}
\begin{cases}
\max_\tau & -\log P(A^c) \\
\text{s.t. } & Q(A)\leq \alpha
\end{cases}=D(P\|Q) T+\mathcal{O}(\sqrt{T}).
\end{equation}
These results indicate that choosing a small $D(P\|Q)$ reduces the rate at which $P(A^c)$ decays to zero.\footnote{While \cite[Theorem 11.8.3]{cover1999elements} is specialized to i.i.d. samples, its generalization to ergodic samples \cite{polyanskiy2014lecture} has been used in \cite{bai2017data}. The continuous time result \eqref{eq:error_asymptotes} follows from \cite{tanaka2024covert}.}

\begin{remark}
Despite these justifications, the higher-order analysis of \eqref{eq:error_asymptotes} (e.g., \cite{lungu2024optimal}) suggests that the KL divergence $D(P\|Q)$ may not be an appropriate measure of stealthiness in the regime of finite $T$. Developments of stealthiness measures that outperform $D(P\|Q)$ in the finite data-length regime are critical for some applications (such as the fastest anomaly detection) and are postponed as an important future work.
\end{remark}
}

\subsection{Problem Formulation}
{\color{black}
Summarizing the discussion so far, this subsection formally states the problem studied in this paper.
\subsubsection{Stealthy Attack Synthesis}\label{sec: Worst-Case Attack Synthesis}
We first formulate a problem for optimal attack policy synthesis under a simplifying assumption that the controller's policy $u_t$ is fixed and is known to the attacker. Adopting $D(P\|Q)$ as the stealthiness measure, the attacker is incentivized to keep $D(P\|Q)$ small while maximizing the expected cost $\mathbb{E}\left[\int_0^T c_t(x_t, u_t)dt\right]$.
Introducing a constant $\lambda>0$ balancing these two requirements, the problem is formulated as follows:
\begin{problem}[KL control problem]\label{prob: KL}
\begin{equation}
\begin{aligned}
\label{eq:kld_soft}
\max_{v \in \mathcal{V}}\; &  \mathbb{E}^P\left[\int_0^T c_t(x_t,u_t)dt\right] - \lambda D(P \| Q)\\
\text{s.t.} \; & dx_t\!=\!f_t(x_t) dt +g_t(x_t)u_tdt+h_t(x_t)dv_t.
\end{aligned}
\end{equation}
\end{problem}
\vspace{2mm}
The term ``KL control problem" is inherited from \cite{theodorou2012relative} where problems with a similar structure were studied.
}

\vspace{3mm}
\subsubsection{Attack Risk Mitigation}\label{Sec: Attack Risk Mitigation}
{\color{black} Here, we consider the scenario shown in Figure~\ref{fig:cps_security} where the controller is now able to apply a control input $u_t$ to combat with the potential attack input $v_t$. As before, we assume the existence of an attack detector to detect the stealthy attack. We model the competition between the controller and the attacker as a minimax game in which the controller acts as the minimizer and the attacker acts as the maximizer. This results in the following \emph{minimax KL control} problem, which is a two-player dynamic zero-sum game between the controller (cost minimizer) and the attacker (cost maximizer):

\begin{problem}[Minimax KL control problem]\label{prob: minimax_KL}
 \begin{equation}
\label{eq:minmax_kl}
\begin{aligned}
\min_{u\in\mathcal{U}} & \max_{v\in\mathcal{V}} \mathbb{E}^{P} \left[ \int_0^T c_t(x_t, u_t)dt \right]-\lambda D(P \| Q)\\
\text{s.t.} \; & dx_t\!=\!f_t(x_t) dt\! +g_t(x_t)u_tdt+h_t(x_t)dv_t
\end{aligned}
\end{equation}   
We assume the constant $\lambda>0$ is known to both players in advance.
\end{problem}
}

\section{Stealthy Attack Synthesis via Path Integral Approach}
\label{sec:deception}
{\color{black}
This section summarizes the main technical results concerning Problem 1.
We present structural properties of the optimal attack signal $v_t$ (Theorem~\ref{thm:prob1_1}) and show how to numerically compute $v_t$ for real-time implementations.

Our first result states that Problem 1 reduces to a quadratic-cost stochastic optimal control (SOC) problem in terms of the bias input $\theta_t$:
\begin{problem}[Quadratic-cost SOC problem]\label{prob: SOC}
\begin{align}
\max_\theta\; &  \mathbb{E}^P\left[\int_0^T \left\{c_t(x_t,u_t)-\frac{\lambda}{2}\|\theta_t\|^2\right\}dt\right] \label{eq:soc_theta}\\
\text{s.t.} \; & dx_t\!=\!f_t(x_t) dt +g_t(x_t)u_tdt+h_t(x_t)(\theta_t dt + dw_t) \nonumber \\
&  \text{where $\theta_t$ satisfies conditions (i)-(iii) in Definition~\ref{def:admissible_attack}}. \nonumber
\end{align}
\end{problem}
\begin{theorem}
\label{thm:prob1_1}
Let $u_t$ be a given $\mathcal{F}_t^x$-adapted process. Then, the following statements hold:
\begin{itemize}
\item[(i)] Problem 1 is equivalent to the quadratic-cost SOC problem \eqref{eq:soc_theta}.
\item[(ii)] There exists a deterministic state feedback policy (which can be written as $\theta_t(x_t)$) for Problem \ref{prob: SOC}.
\item[(iii)] An optimal attack policy $v_t$ (governed by It\^o process  \eqref{eq:admissible_v}) is characterized by 
\begin{equation}
\label{eq:theta_opt}
\theta_t^*(x_t)=\frac{1}{\lambda}h_t^\top\partial_xV_t(x_t)
\end{equation}
where $V_t$ solves the Hamilton-Jacobi Bellman equation
\begin{equation}
\begin{aligned}\label{eq:HJB in V}
\partial_t V_t = & -\frac{1}{2\lambda}\left(\partial_x{V}_t\right)^\top h_th_t^\top\partial_x{V}_t-c_t\\ &-(f_t+g_tu_t)^\top\partial_x{V}_t-\frac{1}{2}\text{Tr}\left(h_th_t^\top\partial^2_x{V}_t\right).
\end{aligned}
\end{equation}
\end{itemize}
\end{theorem}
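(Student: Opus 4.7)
The three parts of the theorem chain together naturally, with the main engine being Girsanov's theorem to reduce (i) to a standard stochastic optimal control problem, followed by dynamic programming to handle (ii) and (iii).

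For part (i), the plan is to evaluate $D(P\|Q)$ explicitly for admissible attack signals. Because $v_t$ is a standard Brownian motion under $Q$ and $dv_t = \theta_t dt + dw_t$ under $P$, Girsanov's theorem (together with the $L_1$ integrability in Definition~\ref{def:admissible_attack}(iii), plus a Novikov-type or localization argument to justify integrability of the exponential) yields the Radon-Nikodym derivative
\begin{equation*}
\frac{dP}{dQ} = \exp\!\left(\int_0^T \theta_t^\top dw_t + \tfrac{1}{2}\int_0^T \|\theta_t\|^2 dt\right).
\end{equation*}
Taking $\log$ and computing $\mathbb{E}^P[\cdot]$, the stochastic integral $\int_0^T \theta_t^\top dw_t$ is a $P$-martingale with zero mean (after standard truncation arguments if needed), so $D(P\|Q) = \tfrac{1}{2}\mathbb{E}^P\!\left[\int_0^T\|\theta_t\|^2 dt\right]$. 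Substituting this into \eqref{eq:kld_soft} reproduces the objective in \eqref{eq:soc_theta}, and the dynamics become exactly the SDE in Problem~\ref{prob: SOC} since $dv_t = \theta_t dt + dw_t$. Conversely, every admissible $\theta_t$ produces an admissible $v_t$, so the two maximization problems are equivalent.

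For parts (ii) and (iii), I would apply the dynamic programming principle to the reformulated SOC problem. Define the value function $V_t(x) := \sup_\theta \mathbb{E}^P\!\left[\int_t^T\!\big\{c_s(x_s,u_s) - \tfrac{\lambda}{2}\|\theta_s\|^2\big\}ds \,\middle|\, x_t = x\right]$, where $u_s$ is the fixed $\mathcal{F}_s^x$-adapted control. Under standard regularity, $V_t$ solves the HJB equation
\begin{equation*}
\partial_t V_t + \sup_{\theta}\left\{c_t - \tfrac{\lambda}{2}\|\theta\|^2 + (f_t + g_t u_t + h_t \theta)^\top \partial_x V_t + \tfrac{1}{2}\mathrm{Tr}(h_t h_t^\top \partial_x^2 V_t)\right\} = 0.
\end{equation*}
Since the Hamiltonian is strictly concave and quadratic in $\theta$, the pointwise maximizer is unique and given by the first-order condition $-\lambda\theta + h_t^\top \partial_x V_t = 0$, i.e., $\theta_t^*(x) = \tfrac{1}{\lambda}h_t^\top \partial_x V_t(x)$ as in \eqref{eq:theta_opt}. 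This is a deterministic function of the current state, establishing (ii). Plugging $\theta^*$ back into the HJB equation cancels the supremum and yields \eqref{eq:HJB in V}, establishing (iii). A standard verification argument (apply It\^o's formula to $V_t(x_t)$ along an arbitrary admissible $\theta$, take expectation, and compare to the candidate optimum) confirms that $\theta^*$ is actually optimal.

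The main obstacle I anticipate is the technical one of making the Girsanov / verification steps fully rigorous: the $L_1$ condition in Definition~\ref{def:admissible_attack} does not by itself guarantee Novikov's condition, so one must either restrict to $\theta$ with sufficient integrability, use a localization sequence of stopping times $\tau_n\uparrow T$ and take limits, or invoke an exponential moment bound. Likewise, the verification theorem for the HJB equation requires that $V_t$ be sufficiently smooth (or interpreted in the viscosity sense) and that $\theta^*$ produce a well-posed closed-loop SDE. These are standard hurdles in continuous-time stochastic control, and I would handle them by imposing mild growth and Lipschitz assumptions on $f_t, g_t, h_t, c_t$ sufficient to guarantee existence and uniqueness of a classical solution to \eqref{eq:HJB in V}, as is implicitly assumed in the path-integral literature this paper builds on.
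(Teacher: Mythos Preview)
Your proposal is correct and follows essentially the same approach as the paper: Girsanov's theorem to reduce the KL term to $\tfrac{1}{2}\mathbb{E}^P\int_0^T\|\theta_t\|^2 dt$ for part (i), followed by the dynamic programming principle and pointwise optimization of the HJB Hamiltonian for parts (ii) and (iii). The only cosmetic difference is that the paper passes to the auxiliary minimization value function $\overline{V}_t=-V_t$ before writing the HJB equation and then converts back, whereas you work directly with the maximization; your added remarks on Novikov-type integrability and verification are more careful than what the paper actually spells out.
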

\begin{remark}
Theorem~\ref{thm:prob1_1} implies that the optimal attack signal can be written as $dv_t=\theta_t(x_t)dt+dw_t$, i.e., an addition of a deterministic bias $\theta_t(x_t)$ computed by \eqref{eq:theta_opt} to the natural disturbance $w_t$.
This is in contrast to the result in discrete-time \cite{bai2017data}, where the optimal attack is a randomized policy.
\end{remark}
\begin{proof}
Notice that the KL divergence $D(P \|Q)$ term in Problem \ref{prob: KL} can be rewritten as

\begin{equation*}
  D(P \|Q)= \mathbb{E}^P \log \frac{dP}{dQ} 
\end{equation*}
Recall that $P$ is the measure in which $w_t$ is a standard Brownian motion and $Q$ is the measure in which the attack signal $v_t$ (governed by It\^o process  \eqref{eq:admissible_v}) is a standard Brownian motion. Using the Girsanov theorem \cite{oksendal2003stochastic}, we obtain
\begin{subequations}
    \label{eq:KL_girsanov}
    \begin{align}
D(P \|Q)= &\mathbb{E}^P \log \frac{dP}{dQ} \nonumber\\
=&\mathbb{E}^P \left[ \int_0^T \theta_t^\top(\omega) dw_t + \frac{1}{2}\int_0^T \|\theta_t(\omega)\|^2 dt\right] \label{eq:KL_girsanov-a}\\
=&\frac{1}{2}\mathbb{E}^P \left[ \int_0^T \|\theta_t(\omega)\|^2 dt \right]. \label{eq:KL_girsanov-b}
\end{align}
\end{subequations}
Notice that the term $\int_0^T \theta_t(\omega)^\top dw_t$ in \eqref{eq:KL_girsanov-a} is an It\^o integral. We obtain \eqref{eq:KL_girsanov-b} using the following property of It\^o integral \cite[Chapter 3]{oksendal2013stochastic}:
\begin{equation*}\label{ito integral}
    \mathbb{E}^P \int_0^T \theta_t^\top(\omega) dw_t=0.
\end{equation*} 
Equation \eqref{eq:KL_girsanov} proves statement (i). In order to prove statements (ii) and (iii), we use the \emph{dynamic programming principle}. We introduce the value function $V_t(x_t)$ for each time $t\in[0,T)$ and the state $x_t\in\mathbb{R}^n$ as 
\begin{equation}
\label{eq:value_f}
V_t(x_t):=\max_\theta  \mathbb{E}^P\int_t^T \left(c_t(x_s,u_s) - \frac{\lambda}{2}\|\theta_s\|^2\right)ds.
\end{equation} 
 Let us define \begin{equation}
\label{eq:value_f_min}
\overline{V_t}(x_t):=\min_\theta  \mathbb{E}^P\int_t^T \left(-c_s (x_s,u_s) + \frac{\lambda}{2}\|\theta_s\|^2\right)ds.
\end{equation} 
Note that $V_t(x_t)=-\overline{V_t}(x_t)$. The stochastic Hamilton-Jacobi-Bellman (HJB) equation \cite{fleming2006controlled, stengel1994optimal} associated with \eqref{eq:value_f_min} is expressed as follows:
\begin{equation}\label{eq:HJB}
    \begin{aligned}
  -\partial_t\overline{V}_t =  \min_{{\theta}_t}\Big[&\frac{\lambda}{2}\|\theta_t\|^2+\left(f_t+g_tu_t + h_t\theta_t\right)^{\top}\partial_x\overline{V}_t\\
  & -c_t+\frac{1}{2}\text{Tr}\left(h_t h_t^\top\partial^2_x\overline{V}_t\right)\Big].
\end{aligned}
\end{equation}
Solving \eqref{eq:HJB}, we get the optimal $\theta_t$ as
\begin{equation}\label{eq:theta_star}
    \theta_t^*(x_t) = -\frac{1}{\lambda}h_t^\top\partial_x\overline{V_t}(x_t).
\end{equation}
Putting \eqref{eq:theta_star} in \eqref{eq:HJB}, we get
\begin{equation}\label{eq:HJB2}
\begin{aligned}
  -\partial_t\overline{V}_t = & -\frac{1}{2\lambda}\left(\partial_x\overline{V}_t\right)^\top h_th_t^\top\partial_x\overline{V}_t-c_t\\ &+(f_t+g_tu_t)^\top\partial_x\overline{V}_t+\frac{1}{2}\text{Tr}\left(h_th_t^\top\partial^2_x\overline{V}_t\right).
\end{aligned}
\end{equation}
Since $V_t(x_t)=-\overline{V_t}(x_t)$, this proves statements (ii) and (iii).
\end{proof}
Generally, it is challenging to compute \eqref{eq:theta_opt} since it requires the solution $V_t(x_t)$ to a nonlinear, possibly high-dimensional, partial differential equation (PDE) \eqref{eq:HJB in V}.
Fortunately, the structure of the optimal control problem \eqref{eq:soc_theta} allows for an application of the \emph{path integral method} \cite{kappen2005path},  offering a Monte-Carlo-based attack signal synthesis. The applicability of the path-integral method to the quadratic-cost SOC problems shown in Problem \ref{prob: SOC} was first pointed out by the work of Kappen\cite{kappen2005path}.
For each time $t\in[0,T)$ and the state $x_t\in\mathbb{R}^n$, the path-integral method allows the adversary to compute the optimal attack signal $\theta_t(x_t)$ by evaluating the path integrals along randomly generated trajectories $\{x_s, u_s\}$, $t\leq s \leq T$ starting from $x_t$. The next result provides the details.
}

\begin{theorem}\label{Theorem: sol of SOC}
  The solution of \eqref{eq:value_f} exists, is unique and is given by \begin{equation}
\label{eq:value_f2}
V_t(x_t)=\lambda \log \mathbb{E}^Q \left[\exp \left\{\frac{1}{\lambda}\int_t^T c_s(x_s, u_s)ds \right\}\right]. 
\end{equation}  
Furthermore, the optimal bias input $\theta_t^*(x_t)$ is given by
\begin{equation}\label{eq: theta_star} \!\!\!\!\!\theta_t^*dt\!=\!\mathcal{H}_t(x_t)\frac{\mathbb{E}^Q\left[\text{exp}{\left\{\frac{1}{\lambda}\int_t^T\!\! c_s(x_s, u_s)ds\right\}}h_t(x_t)d{w}_t\right]}{\mathbb{E}^Q\left[\text{exp}{\left\{\frac{1}{\lambda}\int_t^T \!\!c_s(x_s, u_s)ds\right\}}\right]} 
\end{equation}
where the matrix $\mathcal{H}_t(x_t)$ is defined as
\begin{equation*}
    \mathcal{H}_t(x_t) = h_t^\top(x_t)\left(h_t(x_t)h_t(x_t)^\top\right)^{-1}.
\end{equation*}
\end{theorem}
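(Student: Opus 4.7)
The plan is to establish \eqref{eq:value_f2} by linearizing the HJB equation \eqref{eq:HJB in V} through the Hopf--Cole transformation and then recovering \eqref{eq: theta_star} by rewriting the resulting gradient term in the form of a Monte-Carlo ratio estimator. First I would introduce the exponentiated value function $\Psi_t(x_t):=\exp(V_t(x_t)/\lambda)$ and substitute $V_t = \lambda \log \Psi_t$ into \eqref{eq:HJB in V}. Direct computation gives $\partial_x V_t = \lambda\,\partial_x\Psi_t/\Psi_t$ and $\partial_x^2 V_t = \lambda(\partial_x^2\Psi_t/\Psi_t - \partial_x\Psi_t\partial_x\Psi_t^\top/\Psi_t^2)$; the crucial point is that the nonlinear term $\tfrac{1}{2\lambda}(\partial_x V_t)^\top h_th_t^\top\partial_x V_t$ exactly cancels against the outer-product piece produced by the Hessian transformation. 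What remains is the linear backward PDE
\begin{equation*}
\partial_t\Psi_t + (f_t+g_tu_t)^\top\partial_x\Psi_t + \tfrac{1}{2}\text{Tr}(h_th_t^\top\partial_x^2\Psi_t) + \tfrac{c_t}{\lambda}\Psi_t = 0,\quad \Psi_T=1,
\end{equation*}
which is a Kolmogorov-type equation with a source term.

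Next I would apply the Feynman--Kac formula. By construction of $Q$ (Girsanov), under $Q$ the process $v_t$ is a standard Brownian motion, so the controlled SDE reads $dx_t = (f_t+g_tu_t)dt+h_t dw_t$ with generator matching the linear PDE above. Feynman--Kac then yields $\Psi_t(x_t)=\mathbb{E}^Q[\exp(\tfrac{1}{\lambda}\int_t^T c_s(x_s,u_s)ds)]$, and taking $\lambda\log$ on both sides produces \eqref{eq:value_f2}. Existence and uniqueness inherit from the standard well-posedness of the linear PDE (or equivalently the Feynman--Kac integrand) under the regularity of $f,g,h,c$ assumed in Section~\ref{sec:formulation}.

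To obtain \eqref{eq: theta_star}, I start from the optimality condition \eqref{eq:theta_opt}, which in terms of $\Psi$ reads $\theta_t^*=h_t^\top\partial_x\Psi_t/\Psi_t$. The denominator of \eqref{eq: theta_star} is exactly $\Psi_t(x_t)$, so the task reduces to showing $\mathbb{E}^Q[\exp(\tfrac{1}{\lambda}\int_t^T c_s ds)\,h_t dw_t]= h_th_t^\top\partial_x\Psi_t\,dt$, after which left-multiplication by $\mathcal{H}_t=h_t^\top(h_th_t^\top)^{-1}$ reproduces \eqref{eq: theta_star} because $\mathcal{H}_t h_t h_t^\top = h_t^\top$. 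For this, I would use the Markov property to write the expectation as $\mathbb{E}^Q[h_t dw_t\,\Psi_{t+dt}(x_{t+dt})]$, Itô--Taylor-expand $\Psi_{t+dt}(x_{t+dt})$ around $(t,x_t)$, and observe that all terms linear in $dt$ die in expectation because $\mathbb{E}^Q[dw_t]=0$, leaving only the cross term $\mathbb{E}^Q[h_t dw_t\,(\partial_x\Psi_t)^\top h_t dw_t] = h_th_t^\top\partial_x\Psi_t\,dt$ via the Itô isometry $\mathbb{E}^Q[dw_t dw_t^\top]=I\,dt$.

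The main obstacle I expect is in the third step: making the heuristic ``$dw_t$-level'' argument rigorous in the general nonlinear setting, particularly because $h_t$ may be rectangular ($n\neq m$), so one must verify that the left-inverse $\mathcal{H}_t$ is well-defined (equivalently, that $h_th_t^\top$ is invertible, a standard controllability-type assumption) and must formulate \eqref{eq: theta_star} as the limit of a discretized Monte-Carlo estimator rather than as a pointwise identity in $dt$. A clean alternative would be to replace the informal expansion by Itô's formula applied to the $Q$-martingale $\exp(\tfrac{1}{\lambda}\int_0^s c_r dr)\Psi_s(x_s)$ and isolate $\partial_x\Psi_t$ from the martingale representation; this avoids heuristic manipulations of $dw_t$ at the cost of a slightly longer derivation. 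Technical regularity hypotheses (sufficient smoothness of $V_t$, integrability of the exponential cost functional) are assumed throughout and can be discharged under standard growth conditions.
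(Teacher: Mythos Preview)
Your proposal is correct and follows essentially the same route as the paper: apply the Cole--Hopf transformation $V_t=\lambda\log\Psi_t$ to linearize the HJB equation \eqref{eq:HJB in V}, invoke Feynman--Kac to obtain \eqref{eq:value_f2}, and then recover \eqref{eq: theta_star} by differentiating through the optimality condition \eqref{eq:theta_opt}. In fact, your third step (the It\^o--Taylor/isometry argument identifying $\mathbb{E}^Q[\exp(\cdot)\,h_t\,dw_t]$ with $h_th_t^\top\partial_x\Psi_t\,dt$) supplies more detail than the paper's proof, which simply states that taking the gradient of $V_t$ yields \eqref{eq: theta_star} and defers the derivation to the path-integral literature.
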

\vspace{2mm}
\begin{proof}
 In Theorem \ref{thm:prob1_1}, we proved that the value function $\overline{V}_t(x_t)$ \eqref{eq:value_f_min} satisfies the PDE \eqref{eq:HJB2}. We introduce the exponential transformation of the value function $\overline{V}_t(x_t) = -\lambda\log\Psi_t(x_t)$ (known as Cole-Hopf transformation \cite{kappen2005path}). This will reformulate \eqref{eq:HJB2} as:
\begin{equation}\label{eq:Psi}
    \!\partial_t\Psi_t\!=\!\frac{-c_t\Psi_t}{\lambda}\!-\!(f_t+g_tu_t))^\top\partial_x\Psi_t-\frac{1}{2}\text{Tr}\left(h_th_t^\top\partial^2_x\Psi\right).
\end{equation}
The PDE \eqref{eq:Psi} is linear in terms of $\Psi_t$ and is known as the backward Chapman-Kolmogorov PDE \cite{williams2017model}. According to the Feynman-Kac lemma \cite{oksendal2003stochastic}, the solution of the linear PDE \eqref{eq:Psi} exists and is unique in the sense that $\Psi_t$ solving \eqref{eq:Psi} is given by
\begin{equation}\label{eq:Psi_sol}
    \Psi_t(x_t) = \mathbb{E}^Q \left[\exp \left\{\frac{1}{\lambda}\int_t^T c_s(x_s, u_s)ds \right\}\right]
\end{equation}
where $Q$ is the probability measure in which the attack $v_t$ is a standard Brownian motion i.e., $\theta_s$, $t\leq s\leq T$ is zero. Since $V_t(x_t) = -\overline{V_t}(x_t) = \lambda\log\Psi_t(x_t)$, using \eqref{eq:Psi_sol}, we get the desired result \eqref{eq:value_f2}. Solving \eqref{eq:theta_opt}, i.e., taking the gradient of ${V_t(x_t)}$ with respect to $x$, we get the optimal bias input \eqref{eq: theta_star}.
\end{proof}
Since the right-hand side of \eqref{eq:value_f2} contains the expectation operation with respect to $Q$, using the strong law of large numbers \cite{durrett2019probability}, we can prove that as $N\rightarrow\infty$,
\[
\lambda \log \left[\frac{1}{N}\sum_{i=1}^N \exp \left\{\frac{1}{\lambda}\int_t^T c_s(x^i_s, u^i_s)ds\right\}\right] \overset{a.s.} {\rightarrow} V_t(x_t)
\]
where $\{x_s^i, u^i_s, t\leq s\leq T\}_{i=1}^N$ are randomly drawn sample paths from distribution $Q$.
Since $Q$ is the measure in which $v_t$ is the standard Brownian motion, generating such a sample ensemble is easy. It suffices to perform $N$ independent simulations of the dynamics $dx_s=f_s(x_s)ds +g_s(x_s)u_sds+ h_s(x_s)dw_s$. 
Similarly, the optimal bias input $\theta_t^*$ \eqref{eq: theta_star} can be readily computed by the same simulated ensemble $\{x_s^i, u_s^i, t\leq s\leq T\}_{i=1}^N$ and their path costs \cite{kappen2005path, williams2016aggressive}. According to the strong law of large numbers, as $N\rightarrow\infty$,

\begin{equation} \label{eq: theta_star_MC}\!\!\mathcal{H}_t(x_t)\frac{\frac{1}{N}\sum_{i=1}^N \text{exp}{\left\{\frac{1}{\lambda}\int_t^T c_s(x_s^{i}, u_s^{i})ds\right\}}h_t(x_t)\epsilon}{\frac{\sqrt{\Delta t}}{N}\sum_{i=1}^N \text{exp}{\left\{\frac{1}{\lambda}\int_t^T c_s(x_s^i, u_s^i)ds\right\}}} \overset{a.s.} {\rightarrow} \theta^*_t
\end{equation}
where $\epsilon\sim\mathcal{N}(0,1)$ and $\Delta t$ is the step size. Equation \eqref{eq: theta_star_MC} implies that if the attacker has a simulator engine that can generate a large number of sample trajectories $\{x_t^i, u_t^i\}_{i=1}^N$ from the distribution $Q$ starting from the current state-time pair $(x_t,t)$, then drift term $\theta_t^*$ \eqref{eq: theta_star_MC} of the optimal attack signal $v_t^*$  can be computed directly from the sample ensemble $\{x_t^i, u_t^i\}_{i=1}^N$. A notable feature of such a simulator-driven attack synthesis is that it computes the optimal stealthy attack signal in real time for nonlinear CPS without requiring an explicit model of the system or an explicit policy synthesis step. 

\section{Attack Risk Mitigation}\label{sec: Attack impact mitigation}
\label{sec:mitigation}

In this section, we turn our attention to the controller's problem, who is interested in mitigating the risk of stealthy attacks when the attacker is present. In Section \ref{sec: Connections with Risk-Sensitive Control and Two-Player Zero-Sum Stochastic Differential Game}, we establish a connection between the minimax KL control problem (Problem \ref{prob: minimax_KL}) with risk-sensitive control and two-player zero-sum stochastic differential game. In Sections \ref{sec: Risk-Sensitive Control via Path Integral Approach} and \ref{sec: Two-player Zero-Sum Stochastic Differential Game via Path Integral Approach}, we present the solutions of risk-sensitive control problems and two-player stochastic differential games, 
respectively, using the path integral approach. This allows the controller to synthesize the risk-mitigating control signals via Monte Carlo simulations.  

\subsection{Connections with Risk-Sensitive Control and Two-Player Zero-Sum Stochastic Differential Game}\label{sec: Connections with Risk-Sensitive Control and Two-Player Zero-Sum Stochastic Differential Game}
\begin{figure*}[h]
    \centering   \includegraphics[scale=0.37]{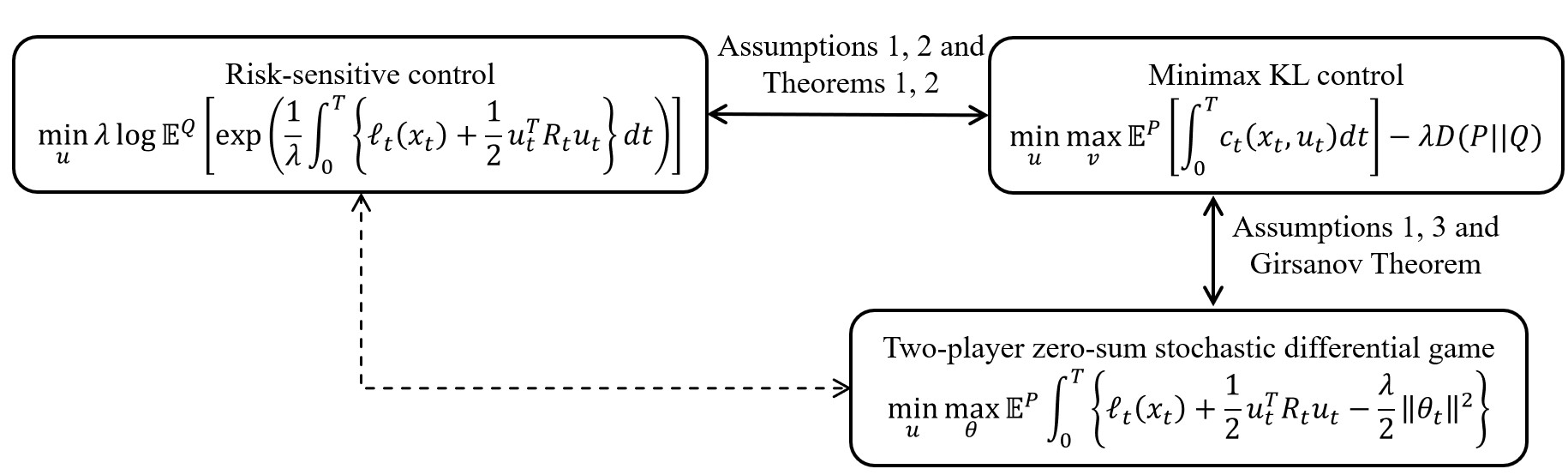}
    \vspace{-1ex}
    \caption{Connections between minimax KL control, risk-sensitive control, and two-player stochastic differential game}
    \label{fig:jacobson}
\end{figure*}
First, we establish the connection between the minimax KL control problem \eqref{eq:minmax_kl} and the \emph{risk-sensitive control problem} \cite{fleming2006controlled, whittle1981risk, jacobson1973optimal}. To this end, we make the following assumption:
\begin{assumption}\label{Assum:quadratic cost}
 the  cost function $c_t$ is quadratic in $u_t$:
\begin{equation}\label{eq: ct}
c_t(x_t, u_t)=\ell_t(x_t)+\frac{1}{2}u_t^\top R_t(x_t) u_t
\end{equation}
where $R_t(x_t)\succeq 0$ for all $t$. 
\end{assumption}
We can determine the value of the inner maximization problem in \eqref{eq:minmax_kl} using \eqref{eq:value_f2}. Using \eqref{eq:value_f2} and \eqref{eq: ct}, Problem \ref{prob: minimax_KL} can be written equivalently as follows:

\begin{problem}[Risk-sensitive control problem]\label{prob: risk-sensitive control}
 \begin{equation}
\label{eq:risk_sensitive}
\begin{aligned}
\min_{u} \; & \lambda \log \mathbb{E}^Q \!\left[\exp \!\left(\frac{1}{\lambda}\!\int_0^T\!\! \left\{\!\ell_t(x_t)\!+\!\frac{1}{2}u_t^{\!\top} R_t u_t\!\right\}dt\!\right)\!\right]\\
\text{s.t.} \; & dx_t\!=\!f_t(x_t) dt\! +g_t(x_t)u_tdt+h_t(x_t)dw_t.
\end{aligned}
\end{equation}   
\end{problem}
\vspace{3mm}
 The objective function in \eqref{eq:risk_sensitive} is also related to the risk measure called \emph{Entropic value-at-risk (EVaR)}
\cite{ahmadi2012entropic}.
This equivalence shows the intimate relationship between the minimax KL control problem and the risk-sensitive control problem. Problem \ref{prob: risk-sensitive control} is a class of risk-sensitive control problems with certain structural constraints. Specifically, the cost function is quadratic in the control input $u_t$, and the state equation is affine in both the control input $u_t$ and the noise $w_t$. This class of risk-sensitive control problems can be solved using the basic path-integral method \cite{broek2012risk}.\par

Now, we establish the connection between the minimax KL control problem \eqref{eq:minmax_kl} and the \emph{two-player zero-sum stochastic differential game}. Using the Girsanov's theorem (Theorem \ref{thm:prob1_1}-(i)) and assuming that the cost function $c_t$ is quadratic in $u_t$ \eqref{eq: ct}, Problem \ref{prob: minimax_KL} can be rewritten as

\begin{problem}[Two-player stochastic differential game]\label{prob: game}
\begin{equation}
\label{eq:h_inf}
\begin{aligned}
\min_u&\max_\theta \mathbb{E}^P\! \left[\int_0^T\!\! \left\{\ell_t(x_t)+\frac{1}{2}u_t^\top R_t u_t -\frac{\lambda}{2}\|\theta_t\|^2\right\}dt\right]\\
\text{s.t.} \; & dx_t\!=\!f_t(x_t) dt\! +g_t(x_t)u_tdt+h_t(x_t)\left(\theta_t dt \!+\!dw_t\!\right).
\end{aligned}
\end{equation}
\end{problem}
\vspace{3mm}
Problem \ref{prob: game} is a class of two-player zero-sum stochastic differential games with certain structural constraints. Specifically, the cost function is quadratic in $u_t$, $\theta_t$, and the state equation is affine in both the inputs $u_t$, $\theta_t$, and the noise input $w_t$. This class of stochastic differential games can be solved using the basic path-integral method \cite{patil2023risk}.\par

The equivalence of Problems \ref{prob: minimax_KL}, \ref{prob: risk-sensitive control}, and \ref{prob: game} under assumptions \ref{Assum:quadratic cost}, \ref{Assumption: linearity_risk} and \ref{Assumption: linearity} is shown in Figure~\ref{fig:jacobson}. Problem \ref{prob: game} is also known as the \emph{non-linear $H_\infty$ control} problem. We remark that the connection between the risk-sensitive control and the $H_\infty$ control (indicated by a dashed line in Figure~\ref{fig:jacobson}) for a class of linear systems is already known (e.g., \cite{jacobson1973optimal}). 
Figure~\ref{fig:jacobson} re-establishes this connection, possibly for a broader class of dynamical systems.

\subsection{Risk-Sensitive Control via Path Integral Approach}\label{sec: Risk-Sensitive Control via Path Integral Approach}

The applicability of the path-integral method to the class of risk-sensitive control problems shown in Problem \ref{prob: risk-sensitive control} was pointed out by the work of Broek et. al. in \cite{broek2012risk}. For each time $t\in[0,T)$ and the state $x_t\in\mathbb{R}^n$, the path-integral method, under a certain assumption, allows the controller to compute the optimal policy $u_t$ under the presence of worst-case attack by evaluating the path integrals along randomly generated trajectories $\{x_s\}$, $t\leq s \leq T$ starting from $x_t$. This can be derived from the fact that under a certain assumption, the value function of the risk-sensitive control problem \eqref{eq:risk_sensitive} can be computed by Monte-Carlo sampling. To see this, for each state-time pair $(x_t,t)$, introduce the value function 
\begin{equation}
\label{eq:value_f_risk}
\!\!V_t(x_t)\!=\min_u \lambda \log \mathbb{E}^Q \!\left[\exp \!\left(\frac{1}{\lambda}\!\int_t^T\!\! \left\{\!\ell_s(x_s)\!+\!\frac{1}{2}u_s^{\!\top} R_s u_s\!\right\}ds\!\right)\!\right]\!\!.
\end{equation}
First, we make the following assumption which is essential to solve \eqref{eq:value_f_risk} using the path integral method \cite{broek2012risk}.

\begin{assumption}\label{Assumption: linearity_risk}
   For all $(x_t,t)$, there exists a constant $0<\xi<\lambda$ satisfying the following equation: 
   \begin{equation}\label{eq: linearizability_risk}
    \!\!h_t(x_t)  h_t^{\!\top}(x_t)\!\! =\! \xi g_t(x_t)R_t^{-1}(x_t)g_t^\top(x_t).
\end{equation}

\end{assumption}
\vspace{3mm}
Assumption \ref{Assumption: linearity_risk} is similar to the assumption required in the path integral formulation of a stochastic control problem \cite{satoh2016iterative}. A possible interpretation of condition \eqref{eq: linearizability_risk} is that in a direction with high noise variance, the control cost of the risk-sensitive control problem \eqref{eq:risk_sensitive} has to be low. Therefore, the weight of the control cost $R_t$ need to be tuned appropriately for the given $\lambda$, $h_t(x_t)$ and  $g_t(x_t)$ for all $t$. See \cite{broek2012risk} for further discussion on this condition.

\begin{theorem}\label{thm: risk-sensitive control}
    Under Assumption \ref{Assumption: linearity_risk}, the solution of \eqref{eq:value_f_risk} exists, is unique and is given by
    \begin{equation}
\label{eq:value_f2_risk}
V_t(x_t)=-\gamma \log \mathbb{E}^Z \left[\exp \left\{-\frac{1}{\gamma}\int_t^T \ell_s(x_s)ds \right\}\right]
\end{equation}
where $Z$ is the probability measure defined under the system dynamics \eqref{SDE} when both $u_s$ and $\theta_s$, $t\leq s\leq T$ are zero and 
\begin{equation}\label{eq: gamma}
\gamma = \frac{\xi\lambda}{\lambda-\xi} .   
\end{equation}
Furthermore, the optimal controller signal $u_t^*$ is given by
\begin{equation}\label{eq: u_star_risk2} \!\!\!\!\!u_t^*dt\!=\!\mathcal{H}_t(x_t)\frac{\mathbb{E}^Z\left[\text{exp}{\left\{-\frac{1}{\gamma}\int_t^T \ell_s(x_s)ds\right\}}h_t(x_t)d{w}_t\right]}{\mathbb{E}^Z\left[\text{exp}{\left\{-\frac{1}{\gamma}\int_t^T \ell_s(x_s)ds\right\}}\right]} 
\end{equation}
where the matrix $\mathcal{H}_t(x_t)$ is defined as
\begin{equation*}
    \!\!\mathcal{H}_t(x_t)\!=\!R_t^{-1} g_t^\top(x_t)\!\left(\!g_t(x_t)R_t^{-1}g_t^\top(x_t)\! -\frac{1}{\lambda} h_t(x_t)h_t(x_t)^\top\!\right)^{\!-1}
\end{equation*}
\end{theorem}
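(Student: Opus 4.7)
The plan is to mirror the structure of the proof of Theorem \ref{Theorem: sol of SOC}, but with the added wrinkle that the HJB equation for the risk-sensitive value \eqref{eq:value_f_risk} carries an extra quadratic gradient term, so Assumption \ref{Assumption: linearity_risk} is what makes a Cole--Hopf linearization possible. First I would write down the dynamic programming principle for \eqref{eq:value_f_risk}. Using the multiplicative structure $V_t(x)=\lambda\log\Psi_t(x)$ (or directly by expanding $\exp((V_{t+dt}-V_t)/\lambda)$ via It\^o's formula), the standard derivation yields the risk-sensitive HJB
\begin{equation*}
-\partial_t V_t=\min_u\Bigl[\ell_t+\tfrac{1}{2}u_t^\top R_t u_t+(f_t+g_tu_t)^\top\partial_xV_t+\tfrac{1}{2}\mathrm{Tr}(h_th_t^\top\partial_x^2V_t)+\tfrac{1}{2\lambda}(\partial_xV_t)^\top h_th_t^\top\partial_xV_t\Bigr].
\end{equation*}
Minimising the quadratic in $u_t$ gives $u_t^\ast=-R_t^{-1}g_t^\top\partial_xV_t$, and substituting back produces a PDE whose nonlinearity lives in two gradient-quadratic terms: $-\tfrac{1}{2}(\partial_xV_t)^\top g_tR_t^{-1}g_t^\top\partial_xV_t$ from the control optimisation and $+\tfrac{1}{2\lambda}(\partial_xV_t)^\top h_th_t^\top\partial_xV_t$ from the risk-sensitive distortion.

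Next I would invoke Assumption \ref{Assumption: linearity_risk} to combine those two terms: since $h_th_t^\top=\xi g_tR_t^{-1}g_t^\top$, they collapse to $-\tfrac{1}{2\gamma}(\partial_xV_t)^\top h_th_t^\top\partial_xV_t$ with $\gamma=\xi\lambda/(\lambda-\xi)$ as in \eqref{eq: gamma}. The resulting PDE then has the single gradient-quadratic term with coefficient $-1/(2\gamma)$, and the exponential transformation $V_t=-\gamma\log\Psi_t$ (note the sign, opposite to the one used in Theorem \ref{Theorem: sol of SOC}) turns it into the linear backward Chapman--Kolmogorov equation
\begin{equation*}
-\partial_t\Psi_t=-\tfrac{\ell_t}{\gamma}\Psi_t-f_t^\top\partial_x\Psi_t-\tfrac{1}{2}\mathrm{Tr}(h_th_t^\top\partial_x^2\Psi_t),
\end{equation*}
with terminal condition $\Psi_T\equiv 1$. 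The Feynman--Kac lemma then yields \eqref{eq:value_f2_risk} with the measure $Z$ identified as the one under which $dx_s=f_s(x_s)ds+h_s(x_s)dw_s$, i.e., with $u_s\equiv 0$ and $\theta_s\equiv 0$, completing the existence/uniqueness statement.

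Finally, for the Monte-Carlo form \eqref{eq: u_star_risk2} I would start from $u_t^\ast=-R_t^{-1}g_t^\top\partial_xV_t=\gamma R_t^{-1}g_t^\top\partial_x\Psi_t/\Psi_t$ and convert $\partial_x\Psi_t$ into an expectation. The cleanest route is the one-step likelihood-ratio/Bismut trick: differentiating the identity $\Psi_t(x)=\exp(-\ell_t(x)\Delta t/\gamma)\,\mathbb{E}^Z[\Psi_{t+\Delta t}(x+f_t\Delta t+h_t\sqrt{\Delta t}\,\epsilon)]$ in $x$ produces, to leading order in $\Delta t$, the relation $\partial_x\Psi_t\,\Delta t=(h_th_t^\top)^{-1}\mathbb{E}^Z[\exp(-\tfrac{1}{\gamma}\!\int_t^T\!\ell_s\,ds)\,h_t\,dw_t]/\mathbb{E}^Z[\exp(-\tfrac{1}{\gamma}\!\int_t^T\!\ell_s\,ds)]$ after iterating along the trajectory. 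Plugging this in and simplifying the prefactor with Assumption \ref{Assumption: linearity_risk}, namely $\gamma R_t^{-1}g_t^\top(h_th_t^\top)^{-1}=\tfrac{\lambda}{\lambda-\xi}R_t^{-1}g_t^\top(g_tR_t^{-1}g_t^\top)^{-1}=R_t^{-1}g_t^\top(g_tR_t^{-1}g_t^\top-\tfrac{1}{\lambda}h_th_t^\top)^{-1}=\mathcal{H}_t(x_t)$, recovers exactly \eqref{eq: u_star_risk2}. The main obstacle I anticipate is the gradient-to-expectation step: it requires either a careful short-time Gaussian-density argument or a Girsanov/Malliavin-type identity, and one must verify that integrability and the invertibility of $h_th_t^\top$ (guaranteed by Assumption \ref{Assumption: linearity_risk} together with nonsingularity of $g_tR_t^{-1}g_t^\top$) are in hand before passing derivatives through the expectation.
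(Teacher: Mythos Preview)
Your proposal is correct and follows essentially the same approach as the paper: derive the risk-sensitive HJB, minimise in $u$, use Assumption~\ref{Assumption: linearity_risk} together with the Cole--Hopf transformation $V_t=-\gamma\log\Psi_t$ to linearise, apply Feynman--Kac, and then differentiate to recover $u_t^\ast$. The only cosmetic differences are that the paper applies the Cole--Hopf substitution first and then invokes Assumption~\ref{Assumption: linearity_risk} to cancel the residual quadratic terms (you do it in the opposite order, which is arguably cleaner), and that you spell out the gradient-to-expectation step via a likelihood-ratio argument whereas the paper simply states ``taking the gradient of \eqref{eq:value_f2_risk} with respect to $x$.''
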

\begin{proof}
   We use \textit{dynamic programming principle} and \textit{Feynman-Kac lemma} \cite{oksendal2003stochastic} to prove this theorem. The stochastic Hamilton-Jacobi-Bellman (HJB) equation \cite{broek2012risk} associated with \eqref{eq:value_f_risk} is expressed as follows:
\begin{equation}\label{eq:HJB_risk}
    \begin{aligned}
  -\partial_t{V}_t =  \min_{{u}_t}\Big[&\frac{1}{2}u_t^\top\!R_tu_t+ \ell_t+ \left(\!f_t\!+\!g_tu_t\right)^\top\!\partial_xV_t\\
  & +\frac{1}{2\lambda}\|h_t^\top\partial_xV_t\|^2 + \frac{1}{2}\text{Tr}\left(h_t h_t^\top\partial^2_x{V}_t\right)\Big]
\end{aligned}
\end{equation}
Solving \eqref{eq:HJB_risk}, we get the optimal $u_t$ as 
   \begin{equation}\label{eq:u_star_risk}
       u_t^*(x_t) = -R_t^{-1}g_t^\top\partial_xV_t(x_t).
   \end{equation}
   Putting \eqref{eq:u_star_risk} in \eqref{eq:HJB_risk}, we get
   
   \begin{equation}\label{eq:HJB_risk2}
    \begin{aligned}
         \!\!-\partial_tV_t\!=&\ell_t\! +\!f_t^\top\!\partial_xV_t\!+\!\frac{1}{2}\text{Tr}\left(h_th_t^\top\partial^2_xV_t\right)\\ 
         &+\frac{1}{2}\!\left(\partial_xV_t\right)^\top\!\!\left(\frac{1}{\lambda}h_th_t^\top\!-\!g_tR_t^{-1}g_t^\top\right)\!\partial_xV_t.
\end{aligned}
\end{equation}
Let $\Psi_t(x_t)$ be the logarithmic transformation of the value function $V_t(x_t)$ (known as Cole-Hopf transformation) defined as $V_t(x_t) = -\gamma\log\Psi_t(x_t)$
where $\gamma>0$ is a proportionality constant to be defined. Applying this transformation of the value function to \eqref{eq:HJB_risk2} yields \begin{equation}\label{eq:Psi_risk}
 \begin{aligned}         \!\!\!\!\partial_t\Psi_t\!=&\frac{\ell_t\Psi_t}{\gamma}-\!\frac{1}{2}\text{Tr}\!\left(h_th_t^\top\!\partial^2_x\Psi_t\right)\!+\!\frac{1}{2\Psi_t}\!\left(\partial_x\Psi_t\right)^{\!T}\!\!h_th_t^\top\!\partial_x\Psi_t \\
         &\!\!\!\!\!\!\!\!\!\!\!\!\!\!+\!\frac{\gamma}{2\Psi_t}\!\left(\partial_x\Psi_t\right)^\top\!\!\left(\frac{1}{\lambda}h_th_t^\top\!-\! g_tR_t^{-1}g_t^\top\right)\!\partial_x\Psi_t\!-\!f_t^\top\partial_x\Psi_t.
          \end{aligned}               
 \end{equation}
If we assume that Assumption \ref{Assumption: linearity_risk} holds and $\gamma$ satisfies \eqref{eq: gamma}, we obtain a linear PDE in $\Psi_t$ known as the backward Chapman-Kolmogorov PDE:

\begin{equation}\label{eq:Psi_risk2}
 \!\partial_t\Psi_t\!=\!\frac{\ell_t\Psi_t}{\lambda}\!-\!f_t^\top\partial_x\Psi_t-\frac{1}{2}\text{Tr}\left(h_th_t^\top\partial^2_x\Psi_t\right).
\end{equation}
According to the Feynman-Kac lemma \cite{oksendal2003stochastic}, the solution of the linear PDE \eqref{eq:Psi_risk2} exists and is unique in the sense that $\Psi_t$ solving \eqref{eq:Psi_risk2} is given by
\begin{equation}\label{eq:Psi_risk3}
    \Psi_t(x_t)= \mathbb{E}^Z \left[\exp \left\{-\frac{1}{\gamma}\int_t^T \ell_s(x_s)ds \right\}\right]
\end{equation}
   where $Z$ is the probability measure defined under the system dynamics \eqref{SDE} when both $u_s$ and $\theta_s$, $t\leq s\leq T$ are zero. Since $V_t(x_t) = -\gamma\log\Psi_t(x_t)$, using \eqref{eq:Psi_risk3}, we get the desired result \eqref{eq:value_f2_risk}. Solving \eqref{eq:u_star_risk} (taking the gradient of \eqref{eq:value_f2_risk} with respect to $x$), we get the optimal controller policy \eqref{eq: u_star_risk2}.
\end{proof}

\begin{remark}
Notice that according to Assumption \ref{Assumption: linearity_risk}, $0<\xi<\lambda$. Therefore, $\gamma$ is always positive.     
\end{remark}

Since the right-hand side of \eqref{eq:value_f2_risk} contains the expectation operation with respect to $Z$, using the strong law of large numbers \cite{durrett2019probability}, we can prove that as $N\rightarrow\infty$,
\[-\gamma \log \left[\frac{1}{N}\sum_{i=1}^N \exp \left\{-\frac{1}{\gamma}\int_t^T \ell_s(x^i_s)ds\right\}\right]\overset{a.s.} {\rightarrow} V_t(x_t)
\]
where $\{x_s^i, t\leq s\leq T\}_{i=1}^N$ are randomly drawn sample paths from distribution $Z$. Since $Z$ is the measure in which both $u_s$ and $\theta_s$, $t\leq s\leq T$ are zero, generating such a sample ensemble is easy.
It suffices to perform $N$ independent simulations of the dynamics $dx_s=f_s(x_s)ds + h_s(x_s)dw_s, \; x_t=x$.  

\subsection{Two-Player Zero-Sum Stochastic Differential Game via Path Integral Approach}\label{sec: Two-player Zero-Sum Stochastic Differential Game via Path Integral Approach}
The applicability of the path-integral method to the class of stochastic differential games shown in Problem \ref{prob: game} was pointed out by our work in \cite{patil2023risk}. For each time $t\in[0,T)$ and the state $x_t\in\mathbb{R}^n$, the path-integral method, under a certain assumption, allows both the controller and the adversary to compute the optimal policies $u_t$ and $\theta_t$- known as \emph{saddle-point policies} \cite[Chapter 2]{bacsar2008h} by evaluating the path integrals along randomly generated trajectories $\{x_s\}$, $t\leq s \leq T$ starting from $x_t$. This can be derived from the fact that under a certain assumption, the value of the game \eqref{eq:h_inf} can be computed by Monte-Carlo sampling. To see this, for each state-time pair $(x_t,t)$, introduce the value of the game as 
\begin{equation}
\label{eq:value_f_game}
V_t(x_t)\!=\!\min_u \max_\theta  \mathbb{E}^P\!\!\left[\int_t^T\!\!\! \left(\!\ell_s(x_s)\! + \!\frac{1}{2}u_s^{\!\top} R_s u_s\! -\! \frac{\lambda}{2}\|\theta_s\|^2\!\!\right)ds\right]\!.
\end{equation}
First, we make the following assumption, which is essential to solve \eqref{eq:value_f_game} using the path integral approach \cite{patil2023risk}.

\begin{assumption}\label{Assumption: linearity}
   For all $(x,t)$, there exists a constant $\alpha>0$ satisfying the following equation: 
   \begin{equation}\label{eq: linearizability_game}
    \!\!h_t(x_t)  h_t^{\!\top}(x_t)\!\! =\! \alpha\!\left(\!g_t(x_t)R_t^{-1}\!g_t^\top(x_t)\! -\! \frac{1}{\lambda} h_t(x_t)  h_t^{\!\top}(x_t)\!\!\right)\!.
\end{equation}
\end{assumption}
\vspace{3mm}
Assumption \ref{Assumption: linearity} is similar to the assumption required in the path integral formulation of a single-agent stochastic control problem \cite{satoh2016iterative}. A possible interpretation of condition \eqref{eq: linearizability_game} is that in a direction with high noise variance, the controller's control cost has to be low. Therefore, the weight of the control cost $R_t$ need to be tuned appropriately for the given $\lambda$, $h_t(x_t)$ and  $g_t(x_t)$ for all $t$. See \cite{patil2023risk} for further discussion on this condition.

\begin{theorem}\label{thm: two-player game}
  Under Assumption \ref{Assumption: linearity}, the solution of \eqref{eq:value_f_game} exists, is unique and is given by
  \begin{equation}
\label{eq:value_f2_game}
V_t(x_t)=-\alpha \log \mathbb{E}^Z \left[\exp \left\{-\frac{1}{\alpha}\int_t^T \ell_s(x_s)ds \right\}\right]
\end{equation}
where $Z$ is the probability measure defined under the system dynamics \eqref{SDE} when both $u_s$ and $\theta_s$, $t\leq s\leq T$ are zero.
Furthermore, the saddle-point policies of Problem \ref{prob: game} are given by 
\begin{equation}\label{eq: u_star_game} \!\!\!\!\!u_t^*dt\!=\!\mathcal{H}^u_t(x_t)\frac{\mathbb{E}^Z\left[\text{exp}{\left\{-\frac{1}{\alpha}\int_t^T \ell_s(x_s)ds\right\}}h_t(x_t)d{w}_t\right]}{\mathbb{E}^Z\left[\text{exp}{\left\{-\frac{1}{\alpha}\int_t^T \ell_s(x_s)ds\right\}}\right]} 
\end{equation}
where the matrix $\mathcal{H}^u_t(x_t)$ is defined as
\begin{equation*}
    \!\!\mathcal{H}^u_t(x_t)\!=\!R_t^{-1} g_t^\top(x_t)\!\left(\!g_t(x_t)R_t^{-1}g_t^\top(x_t)\! -\frac{1}{\lambda} h_t(x_t)h_t(x_t)^\top\!\right)^{\!-1}
\end{equation*}
and
\begin{equation}\label{eq: theta_star_game} \!\!\!\!\!\theta_t^*dt\!=\!\mathcal{H}^\theta_t(x_t)\frac{\mathbb{E}^Z\left[\text{exp}{\left\{-\frac{1}{\alpha}\int_t^T \ell_s(x_s)ds\right\}}h_t(x_t)d{w}_t\right]}{\mathbb{E}^Z\left[\text{exp}{\left\{-\frac{1}{\alpha}\int_t^T \ell_s(x_s)ds\right\}}\right]} 
\end{equation}
where the matrix $\mathcal{H}^\theta_t(x_t)$ is defined as
\begin{equation*}
    \!\!\mathcal{H}^\theta_t(x_t)\!=\!-\frac{1}{\lambda} h_t^\top(x_t)\!\left(\!g_t(x_t)R_t^{-1}g_t^\top(x_t)\! -\frac{1}{\lambda} h_t(x_t)h_t(x_t)^\top\!\right)^{\!\!-1}\!\!\!\!.
\end{equation*}
\end{theorem}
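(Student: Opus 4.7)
The plan is to follow the same dynamic-programming plus Cole-Hopf strategy used in the proofs of Theorems \ref{Theorem: sol of SOC} and \ref{thm: risk-sensitive control}, now adapted to the zero-sum game. First I would write down the Hamilton-Jacobi-Isaacs (HJI) PDE associated with $V_t(x_t)$ in \eqref{eq:value_f_game}. Because the running cost is strictly convex in $u_t$ (as $R_t\succeq 0$) and strictly concave in $\theta_t$ (as $\lambda>0$), Isaacs' condition holds, the inner $\min_u\max_\theta$ and $\max_\theta\min_u$ agree, and pointwise optimization yields
\begin{equation*}
u_t^*=-R_t^{-1}g_t^{\top}\partial_xV_t,\qquad \theta_t^*=\tfrac{1}{\lambda}h_t^{\top}\partial_xV_t.
\end{equation*}
Substituting $u_t^*,\theta_t^*$ back and collecting the quadratic gradient contributions from $\tfrac{1}{2}u^{*\top}R_tu^*$, $-\tfrac{\lambda}{2}\|\theta^*\|^2$, $(g_tu^*)^{\top}\partial_xV_t$, and $(h_t\theta^*)^{\top}\partial_xV_t$ produces the HJI equation
\begin{equation*}
-\partial_tV_t=\ell_t+f_t^{\top}\partial_xV_t+\tfrac{1}{2}\mathrm{Tr}(h_th_t^{\top}\partial^2_xV_t)+\tfrac{1}{2}(\partial_xV_t)^{\top}\bigl(\tfrac{1}{\lambda}h_th_t^{\top}-g_tR_t^{-1}g_t^{\top}\bigr)\partial_xV_t,
\end{equation*}
which is formally identical to the risk-sensitive HJB \eqref{eq:HJB_risk2}, a coincidence that already explains why the same path-integral machinery will apply.

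Next I would apply the Cole-Hopf transformation $V_t(x_t)=-\alpha\log\Psi_t(x_t)$ and substitute the derivatives $\partial_tV_t=-\alpha\partial_t\Psi_t/\Psi_t$, $\partial_xV_t=-\alpha\partial_x\Psi_t/\Psi_t$, and $\partial^2_xV_t=-\alpha\bigl(\partial^2_x\Psi_t/\Psi_t-(\partial_x\Psi_t)(\partial_x\Psi_t)^{\top}/\Psi_t^{2}\bigr)$. After multiplying through by $\Psi_t/\alpha$, the resulting equation contains a quadratic gradient term of the form
\begin{equation*}
\tfrac{1}{2\Psi_t}(\partial_x\Psi_t)^{\top}\!\bigl[h_th_t^{\top}+\alpha\bigl(\tfrac{1}{\lambda}h_th_t^{\top}-g_tR_t^{-1}g_t^{\top}\bigr)\bigr]\partial_x\Psi_t,
\end{equation*}
and Assumption \ref{Assumption: linearity} is precisely the algebraic identity $h_th_t^{\top}(1+\alpha/\lambda)=\alpha g_tR_t^{-1}g_t^{\top}$ that makes the bracket vanish. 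The PDE thereby collapses to the backward Chapman-Kolmogorov equation
\begin{equation*}
\partial_t\Psi_t=\tfrac{\ell_t\Psi_t}{\alpha}-f_t^{\top}\partial_x\Psi_t-\tfrac{1}{2}\mathrm{Tr}(h_th_t^{\top}\partial^2_x\Psi_t),\qquad \Psi_T\equiv 1,
\end{equation*}
and the Feynman-Kac lemma delivers $\Psi_t(x_t)=\mathbb{E}^Z[\exp(-\tfrac{1}{\alpha}\int_t^T\ell_s(x_s)\,ds)]$. Inverting the Cole-Hopf transformation proves \eqref{eq:value_f2_game}.

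Finally, to obtain the Monte-Carlo formulas \eqref{eq: u_star_game} and \eqref{eq: theta_star_game}, I would substitute $\partial_xV_t=-\alpha\partial_x\Psi_t/\Psi_t$ into the closed-form expressions for $u_t^*$ and $\theta_t^*$, and then convert $\partial_x\Psi_t$ into a path-integral quantity using the same stochastic-flow/likelihood-ratio identity exploited in the proof of Theorem \ref{thm: risk-sensitive control}, namely $h_th_t^{\top}\,\partial_x\Psi_t\,dt=\mathbb{E}^Z\bigl[\exp(-\tfrac{1}{\alpha}\int_t^T\ell_s\,ds)\,h_t\,dw_t\bigr]$. Assumption \ref{Assumption: linearity} equivalently states $g_tR_t^{-1}g_t^{\top}-\tfrac{1}{\lambda}h_th_t^{\top}=\tfrac{1}{\alpha}h_th_t^{\top}$, so the inverse in the definitions of $\mathcal{H}^u_t$ and $\mathcal{H}^\theta_t$ is well defined on the range of $h_t^{\top}$, and the arithmetic reduces $u_t^*\,dt$ and $\theta_t^*\,dt$ to the stated ratios of expectations.

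The main obstacle is verifying that Assumption \ref{Assumption: linearity} produces the exact cancellation of the Cole-Hopf cross-term in the game setting and, more subtly, that Isaacs' saddle-point condition yields not merely a value function but strategies $u^*\in\mathcal{U}$, $\theta^*\in\mathcal{V}$ attaining that value. A secondary technical point is the regularity required to invoke Feynman-Kac and the Malliavin-style identity expressing $\partial_x\Psi_t$ as a Monte-Carlo expectation over $h_t\,dw_t$; these are handled under standard Lipschitz and growth conditions on $\ell_t,f_t,g_t,h_t$, as in \cite{broek2012risk, patil2023risk}.
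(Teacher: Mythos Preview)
Your proposal is correct and follows essentially the same route as the paper: write the HJI equation, perform the pointwise $\min_u\max_\theta$ to obtain $u_t^*=-R_t^{-1}g_t^{\top}\partial_xV_t$ and $\theta_t^*=\tfrac{1}{\lambda}h_t^{\top}\partial_xV_t$, substitute back, apply the Cole--Hopf transformation $V_t=-\alpha\log\Psi_t$, use Assumption~\ref{Assumption: linearity} to kill the quadratic gradient term, and then invoke Feynman--Kac on the resulting linear backward Kolmogorov PDE. Your additional remarks on Isaacs' condition and on the algebraic identity $h_th_t^{\top}(1+\alpha/\lambda)=\alpha g_tR_t^{-1}g_t^{\top}$ making the bracket vanish are helpful elaborations of steps the paper leaves implicit, and your coefficient $\ell_t\Psi_t/\alpha$ in the linearized PDE is in fact the correct one (consistent with the Feynman--Kac representation \eqref{eq:value_f2_game}).
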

\begin{proof}
   The stochastic Hamilton-Jacobi-Isaacs (HJI) equation \cite{bacsar2008h} associated with \eqref{eq:value_f_game} is expressed as follows:
   \begin{equation}\label{eq:HJI}
    \begin{aligned}
         \!\!-\partial_tV_t\!=\!\min_{{u_t}} \max_{{\theta_t}}\!\Bigg[&\frac{1}{2}u_t^\top\!R_tu_t-\frac{\lambda}{2}\|\theta_t\|^2+\!\ell_t\!\\
         &\!\!\!\!\!\!\!\!\!\!\!\!\!\!\!\!\!\!\!\!\!\!\!\!+\!\left(\!f_t\!+\!g_tu_t\!+\!h_t\theta_t\right)^\top\!\partial_xV_t+\!\frac{1}{2}\text{Tr}\!\left(h_th_t^\top\partial^2_xV_t\right)\!\Bigg]\!.
          \end{aligned} 
\end{equation}
   Solving \eqref{eq:HJI}, we get 
   \begin{equation}\label{eq:saddle-point}
       u_t^*(x_t) \!= \!-R_t^{-1}g_t^\top\partial_xV_t(x_t), \quad \theta_t^*(x_t)\! =\!\frac{1}{\lambda}h_t^\top\partial_xV_t(x_t).
   \end{equation}
Putting \eqref{eq:saddle-point} in \eqref{eq:HJI}, we get  
\begin{equation}\label{eq:HJI2}
    \begin{aligned}
         \!\!-\partial_tV_t\!=&\ell_t\! +\!f_t^\top\!\partial_xV_t\!+\!\frac{1}{2}\text{Tr}\left(h_th_t^\top\partial^2_xV_t\right)\\ 
         &+\frac{1}{2}\!\left(\partial_xV_t\right)^\top\!\!\left(\frac{1}{\lambda}h_th_t^\top\!-\!g_tR_t^{-1}g_t^\top\right)\!\partial_xV_t.
\end{aligned}
\end{equation}
Let $\Psi_t(x_t)$ be the logarithmic transformation of the value function $V_t(x_t)$ defined as $V_t(x_t) = -\alpha\log\Psi_t(x_t)$
where $\alpha>0$ is a proportionality constant defined by the Assumption \ref{Assumption: linearity}. Applying this transformation of the value function to \eqref{eq:HJI2} yields \begin{equation}\label{eq:Psi_game}
 \begin{aligned}         \!\!\!\!\partial_t\Psi_t\!=&\frac{\ell_t\Psi_t}{\alpha}-\!\frac{1}{2}\text{Tr}\!\left(h_th_t^\top\!\partial^2_x\Psi_t\right)\!+\!\frac{1}{2\Psi_t}\!\left(\partial_x\Psi_t\right)^{\!T}\!\!h_th_t^\top\!\partial_x\Psi_t \\
         &\!\!\!\!\!\!\!\!\!\!\!\!\!\!+\!\frac{\alpha}{2\Psi_t}\!\left(\partial_x\Psi_t\right)^\top\!\!\left(\frac{1}{\lambda}h_th_t^\top\!-\! g_tR_t^{-1}g_t^\top\right)\!\partial_x\Psi_t\!-\!f_t^\top\partial_x\Psi_t.
          \end{aligned}               
 \end{equation}
By assuming an $\alpha$ satisfying Assumption \ref{Assumption: linearity} holds in \eqref{eq:Psi_game}, we obtain a linear PDE in $\Psi_t$ known as the backward Chapman-Kolmogorov PDE:

\begin{equation}\label{eq:Psi_game2}
 \!\partial_t\Psi_t\!=\!\frac{\ell_t\Psi_t}{\lambda}\!-\!f_t^\top\partial_x\Psi_t-\frac{1}{2}\text{Tr}\left(h_th_t^\top\partial^2_x\Psi_t\right).
\end{equation}
According to the Feynman-Kac lemma \cite{oksendal2003stochastic}, the solution of the linear PDE \eqref{eq:Psi_game2} exists and is unique in the sense that $\Psi_t$ solving \eqref{eq:Psi_game2} is given by
\begin{equation}\label{eq:Psi_game3}
    \Psi_t(x_t)= \mathbb{E}^Z \left[\exp \left\{-\frac{1}{\alpha}\int_t^T \ell_s(x_s)ds \right\}\right]
\end{equation}
   where $Z$ is the probability measure defined under the system dynamics \eqref{SDE} when both $u_s$ and $\theta_s$, $t\leq s\leq T$ are zero. Since $V_t(x_t) = -\alpha\log\Psi_t(x_t)$, using \eqref{eq:Psi_game3}, we get the desired result \eqref{eq:value_f2_game}. Solving \eqref{eq:saddle-point} (taking the gradient of \eqref{eq:value_f2_game} with respect to $x$), we get the saddle-point policies \eqref{eq: u_star_game} and \eqref{eq: theta_star_game}.
\end{proof}

 Since the right-hand side of \eqref{eq:value_f2_game} contains the expectation operation with respect to $Z$, using the strong law of large numbers \cite{durrett2019probability}, we can prove that as $N\rightarrow\infty$,
\begin{equation}\label{eq: eq:value_f2_game_MC}
    -\alpha \log \left[\frac{1}{N}\sum_{i=1}^N \exp \left\{-\frac{1}{\alpha}\int_t^T \ell_s(x^i_s)ds\right\}\right]\overset{a.s.} {\rightarrow} V_t(x_t)
\end{equation}
where $\{x_s^i, t\leq s\leq T\}_{i=1}^N$ are randomly drawn sample paths from distribution $Z$. Since $Z$ is the measure in which both $u_s$ and $\theta_s$, $t\leq s\leq T$ are zero, generating such a sample ensemble is easy.
It suffices to perform $N$ independent simulations of the dynamics $dx_s=f_s(x_s)ds + h_s(x_s)dw_s, \; x_t=x$. 

\begin{remark}\label{rem: equivalence}
Notice that the solution obtained by solving the risk-sensitive control problem \eqref{eq:value_f2_risk} is the same as the one obtained by solving the two-player zero-sum stochastic differential game \eqref{eq:value_f2_game} if $\alpha = \gamma$.
From \eqref{eq: linearizability_risk}, \eqref{eq: gamma}, and \eqref{eq: linearizability_game} we can indeed prove that $\alpha = \gamma$ i.e., the required condition to apply the path integral method to solve the risk-sensitive control problems is the same as the one in the two-player zero-sum stochastic differential games.    
\end{remark}

Similar to \eqref{eq: eq:value_f2_game_MC}, the saddle-point policies $u_t^*, \theta_t^*$ can be readily computed by the same simulated ensemble $\{x_s^i, t\leq s\leq T\}_{i=1}^N$ under distribution $Z$ and their path costs \cite{patil2023risk}. According to the strong law of large numbers, as $N\rightarrow\infty$, 

\begin{equation}\label{eq: u_star_game_MC} \mathcal{H}^u_t(x_t)\frac{\frac{1}{N}\sum_{i=1}^N\text{exp}{\left\{-\frac{1}{\alpha}\int_t^T \ell_s(x_s^i)ds\right\}}h_t(x_t)\epsilon}{\frac{\sqrt{\Delta t}}{N}\sum_{i=1}^N\text{exp}{\left\{-\frac{1}{\alpha}\int_t^T \ell_s(x_s^i)ds\right\}}} \overset{a.s.} {\rightarrow} u_t^*
\end{equation}

\begin{equation}\label{eq: theta_star_game_MC} \mathcal{H}^\theta_t(x_t)\frac{\frac{1}{N}\sum_{i=1}^N\text{exp}{\left\{-\frac{1}{\alpha}\int_t^T \ell_s(x_s^i)ds\right\}}h_t(x_t)\epsilon}{\frac{\sqrt{\Delta t}}{N}\sum_{i=1}^N\text{exp}{\left\{-\frac{1}{\alpha}\int_t^T \ell_s(x_s^i)ds\right\}}} \overset{a.s.} {\rightarrow} \theta_t^*
\end{equation}
where $\epsilon\sim\mathcal{N}(0,1)$ and $\Delta t$ is the step size. Equation \eqref{eq: u_star_game_MC} implies that under the assumption \eqref{eq: linearizability_risk} or \eqref{eq: linearizability_game}, the minimax KL control problem (Problem \ref{prob: game}) can be solved using the path integral approach. That is, if the controller has a simulator engine that can generate a large number of sample trajectories $\{x_t^i\}_{i=1}^N$ from the distribution $Z$ starting from the current state-time pair $(x_t,t)$, then the optimal attack mitigating control signal $u^*_t$ \eqref{eq: u_star_game} against the worst-case attack can be computed directly from the sample ensemble $\{x_t^i\}_{i=1}^N$. A notable feature of such a simulator-driven attack mitigation policy synthesis is that it requires neither an explicit model of the system nor an explicit policy synthesis step.

\section{Numerical Experiments}
In this section, we present two numerical studies illustrating the proposed attack synthesis and mitigation approach.
\subsection{Collision Avoidance with the Unsafe Region}
 \begin{figure*}
     \centering
       \begin{tabular}{c c}
\!\!\!\!\!\includegraphics[scale=0.37]{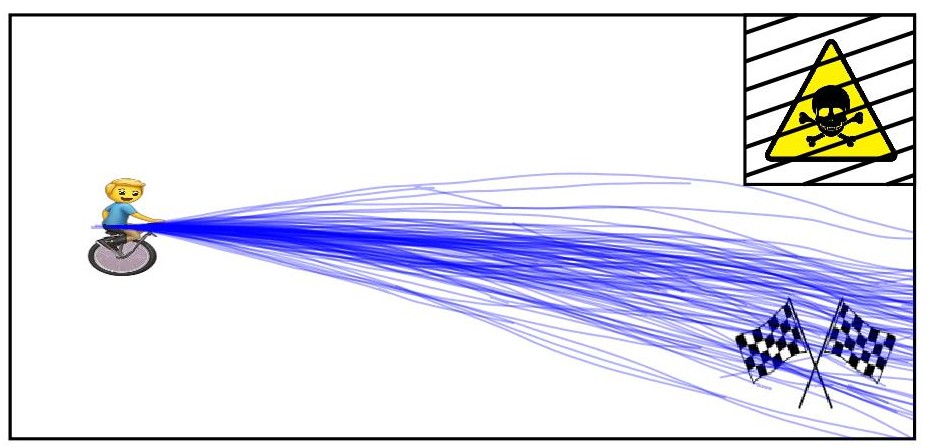} &\!\!\!\!\!\!\includegraphics[scale=0.37]{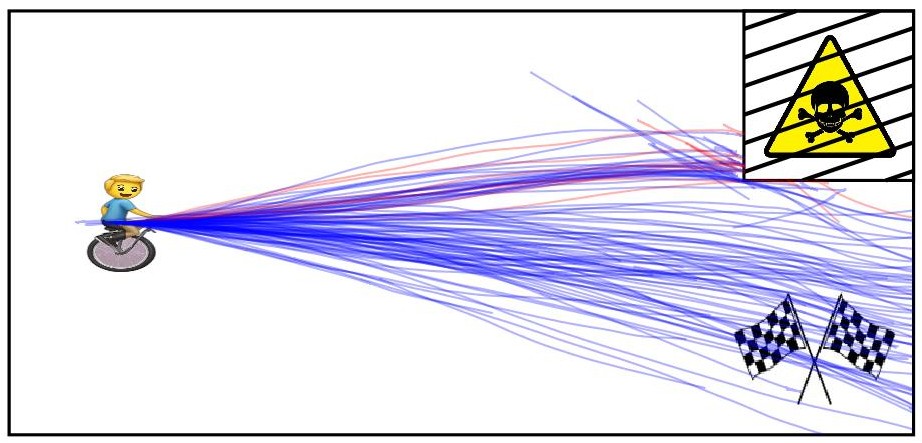} \\
\!\!\!\!\!(a) No attack, $P^{\text{crash}} \approx 0$  &\!\!\!\!\!\!(b) Stealthy attack, $\lambda=2$, $P^{\text{crash}} \approx 0.09$ \\

\!\!\!\!\!\includegraphics[scale=0.37]{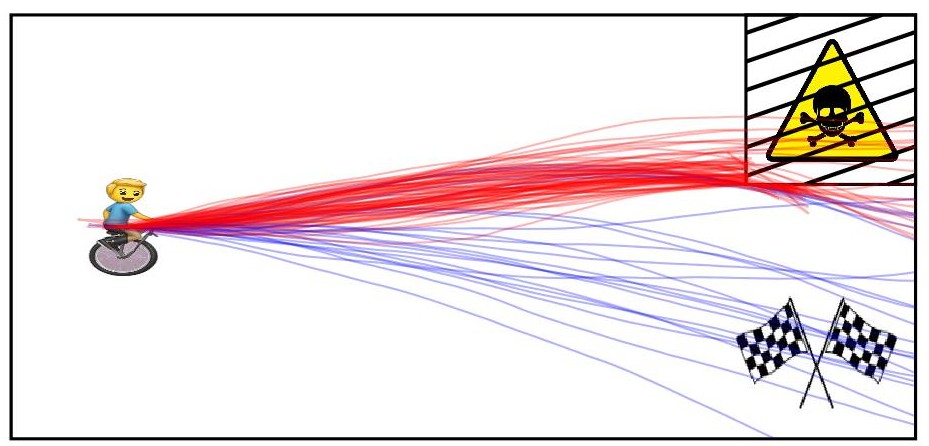} &\!\!\!\!\!\!\includegraphics[scale=0.37]{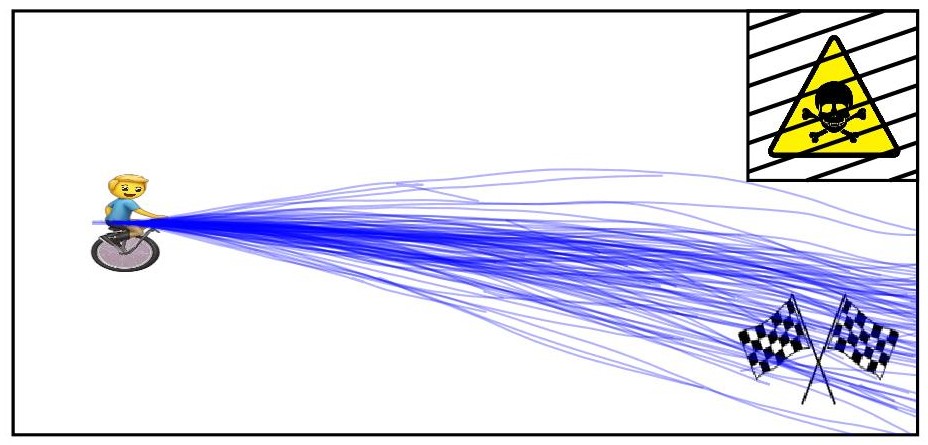} \\
\!\!\!\!\!(c) Stealthy attack, $\lambda=0.1$, $P^{\text{crash}} \approx 0.78$  &\!\!\!\!\!\!(d) Attack mitigation, $\lambda=0.1$, $P^{\text{crash}} \approx 0$ \\

       \end{tabular}
         \caption{A unicycle navigation problem. $100$ sample paths generated without the attacker and with the attacker for two values of $\lambda$ are shown. The probability of crashed paths $P^{\text{crash}}$ are noted below each case.} 
         \label{Fig. worst-case attack}
 \end{figure*}
 
Consider an autonomously operated unicycle that receives control commands $u_t$ to move from an initial location to a target destination. In this setting, a noise signal $w_t$ perturbs the control commands, and a stealthy attacker may hijack control by injecting an attack signal $v_t\neq w_t$, aiming to misguide the unicycle to an unsafe region. Consider the following unicycle dynamics model:
\begin{equation*} \label{unicycle model}
\begin{aligned}
    \begin{bmatrix}
    d{p}^x_t\\d{p}^y_t\\d{s_t}\\d{\phi_t}
    \end{bmatrix}\!\!=&\!
    \begin{bmatrix}
    {s}_t\cos{{\phi_t}}\\{s_t}\sin{{\phi_t}}\\0\\0
    \end{bmatrix}\!dt+\!\!\begin{bmatrix}
    0 & 0\\0 & 0\\1 & 0\\0 & 1
    \end{bmatrix} \!
    \begin{bmatrix}
    a_t\\
    \omega_t
    \end{bmatrix}\!dt \\
    & +\!
    \begin{bmatrix}
    0 & 0\\0 & 0\\1 & 0\\0 & 1
    \end{bmatrix} \!
    \begin{bmatrix}
    \sigma_t & 0\\
    0 & \nu_t 
    \end{bmatrix}\left(
    \begin{bmatrix}
    \Delta a_t\\
    \Delta \omega_t
    \end{bmatrix}\!dt \!+\!
    d{w}_t\right)
   ,
\end{aligned}
\end{equation*}
where $p_t:=\begin{bmatrix}
  {p}^x_t & {p}^y_t  
\end{bmatrix}^\top$, ${s}_t$ and ${\phi_t}$ denote the position, speed, and the heading angle of the unicycle, respectively, at time $t$. The control input $u_t:=\begin{bmatrix} a_t & \omega_t \end{bmatrix}^\top$ consists of acceleration $a_t$ and angular speed $\omega_t$. $\theta_t:=\begin{bmatrix} \Delta a_t & \Delta \omega_t \end{bmatrix}^\top$ is the attacker's bias input, $d{w}_t\in\mathbb{R}^2$ is the white noise and $\sigma_t$, $\nu_t$ are the noise level parameters. As illustrated in Figure \ref{Fig. worst-case attack}, the unicycle’s objective is to navigate from her initial position to the target position $\begin{bmatrix}
    \mathcal{G}^x & \mathcal{G}^y
\end{bmatrix}^\top$ (shown by two cross flags at the bottom right). Let $\mathcal{X}^\text{unsafe}$ denote the unsafe region shown by a hatched box at the top right. \par

\subsubsection{Stealthy Attack Synthesis}
First, we will compute the worst-case attack signal (i.e., solve Problem \ref{prob: KL}) assuming that the controller's policy $u_t$ is fixed and is known to the attacker. $u_t$ is designed to drive the unicycle to the target location. For the simulation, we set $\sigma_t=\nu_t=0.1$, $\forall t$, $T=5$, $c_t({x_t, u_t}) = b_t\left[\left(\mathcal{G}^x-{p}^x_t\right)^2 + \left(\mathcal{G}^y-{p}^y_t\right)^2\right] + \frac{1}{2}u_t^\top u_t + \eta_t\mathds{1} _{{p_t}\in\mathcal{X}^{\text{unsafe}}}$ where $b_t = 0.1, \eta_t = 0.1, \forall t$. $\mathds{1} _{{p_t}\in\mathcal{X}^{\text{unsafe}}}$ is an indicator function which returns $1$ when the position $p_t$ of the unicycle is inside the unsafe region $\mathcal{X}^\text{unsafe}$ and $0$ otherwise. In order to evaluate the optimal bias input \eqref{eq: theta_star} via Monte Carlo sampling, $10^4$ trajectories and a step size equal to $0.01$ are used. Figure \ref{Fig. worst-case attack}(a) shows the plot of $100$ trajectories when the system is under no attack i.e., the bias input $\theta_t=0, \forall t$. The trajectories are color-coded; the red paths crash with the unsafe region $\mathcal{X}^{\text{unsafe}}$, while the blue paths converge in the neighborhood of the target position. Figures \ref{Fig. worst-case attack}(b) and \ref{Fig. worst-case attack}(c) show the plots of $100$ trajectories with the same color-coding scheme, when the system is under the attack for two values of $\lambda$. A lower value of $\lambda$ implies that the attacker cares less about being stealthy and more about crashing the unicycle with the unsafe region $\mathcal{X}^{\text{unsafe}}$. A higher value of $\lambda$ implies the opposite. We also report $P^{\text{crash}}$, the percentage of paths that crash with the unsafe region $\mathcal{X}^{\text{unsafe}}$. Under no attack (Figure \ref{Fig. worst-case attack}(a)), none of the paths crash with the unsafe region. On the other hand, some paths crash under the adversary's attack, and for a lower value of $\lambda$, $P^{\text{crash}}$ is higher.

\subsubsection{Attack Risk Mitigation}
Now, we will solve the controller's problem who is interested in mitigating the risk of stealthy attacks i.e., solve Problem \ref{prob: minimax_KL}. In Section \ref{sec: Connections with Risk-Sensitive Control and Two-Player Zero-Sum Stochastic Differential Game}, we established the connections between the minimax KL control problem (Problem \ref{prob: minimax_KL}) with risk-sensitive control (Problem \ref{prob: risk-sensitive control}) and two-player zero-sum stochastic differential game (Problem \ref{prob: game}). We showed in Section \ref{sec: Two-player Zero-Sum Stochastic Differential Game via Path Integral Approach} that under Assumption \ref{Assumption: linearity_risk} or \ref{Assumption: linearity}, the solutions of both the problems—Problems \ref{prob: risk-sensitive control} and \ref{prob: game}—lead to the same optimal policy of the controller \eqref{eq: u_star_game} under the worst-case attack signal. In order to use the path integral control to solve Problem \ref{prob: risk-sensitive control} or \ref{prob: game}, it is necessary to find a constant $\alpha>0$ (by Assumption \ref{Assumption: linearity}) such that 
\begin{equation*}
    \begin{bmatrix}
    \sigma_t & 0\\
    0 & \nu_t 
    \end{bmatrix}\begin{bmatrix}
    \sigma_t & 0\\
    0 & \nu_t 
    \end{bmatrix}^\top = \alpha\left(I_{2\times2} - \frac{1}{\lambda}\begin{bmatrix}
    \sigma_t & 0\\
    0 & \nu_t 
    \end{bmatrix}\begin{bmatrix}
    \sigma_t & 0\\
    0 & \nu_t 
    \end{bmatrix}^\top\right)
\end{equation*}
where $I_{2\times2}$ is an identity matrix of size $2\times2$. We solve Problem \ref{prob: game} and evaluate the saddle-point policies \eqref{eq: u_star_game}, \eqref{eq: theta_star_game} via path integral control. Note that solving Problem \ref{prob: risk-sensitive control} will also give us the same results. We use $10^4$ Monte Carlo trajectories and a step size equal to $0.01$. Figure \ref{Fig. worst-case attack}(d) shows the plots of $100$ sample trajectories generated using synthesized saddle-point policies $(u^*_t, \theta^*_t)$ for $\lambda = 0.1$. In Figures \ref{Fig. worst-case attack}(b) and \ref{Fig. worst-case attack}(c), the controller is unaware of the attacker. However, in Figure \ref{Fig. worst-case attack}(d), the controller is aware of the attacker and designs an attack mitigating policy to combat the potential attacks. As we observe, under the attack mitigating policy, the controller is able to avoid the crashes with the unsafe region $\mathcal{X}^{\text{unsafe}}$.  

\subsection{Cruise Control}
\begin{figure*}
     \centering
       \begin{tabular}{c c}
\!\!\!\!\!\includegraphics[scale=0.18]{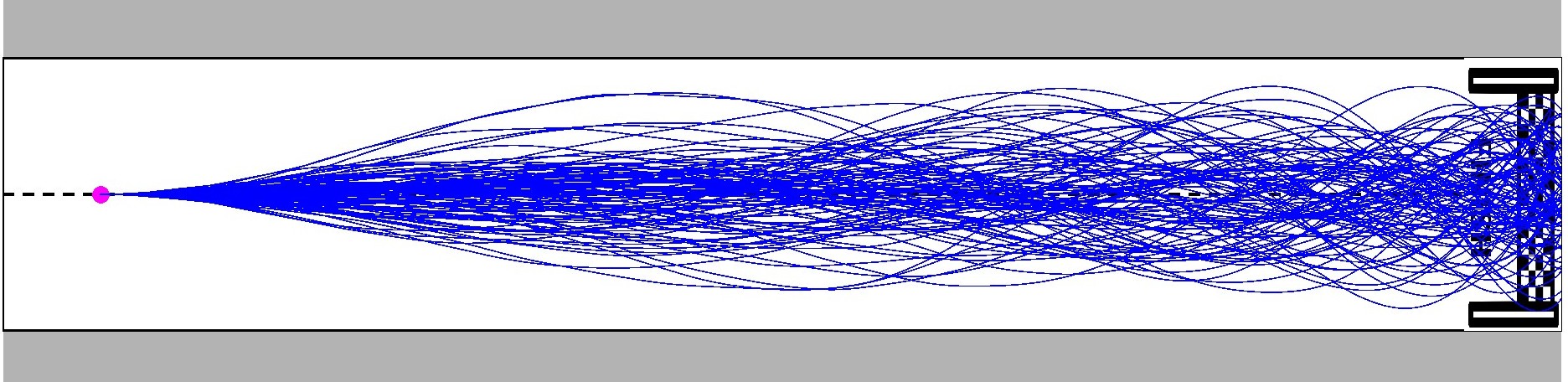} &\includegraphics[scale=0.18]{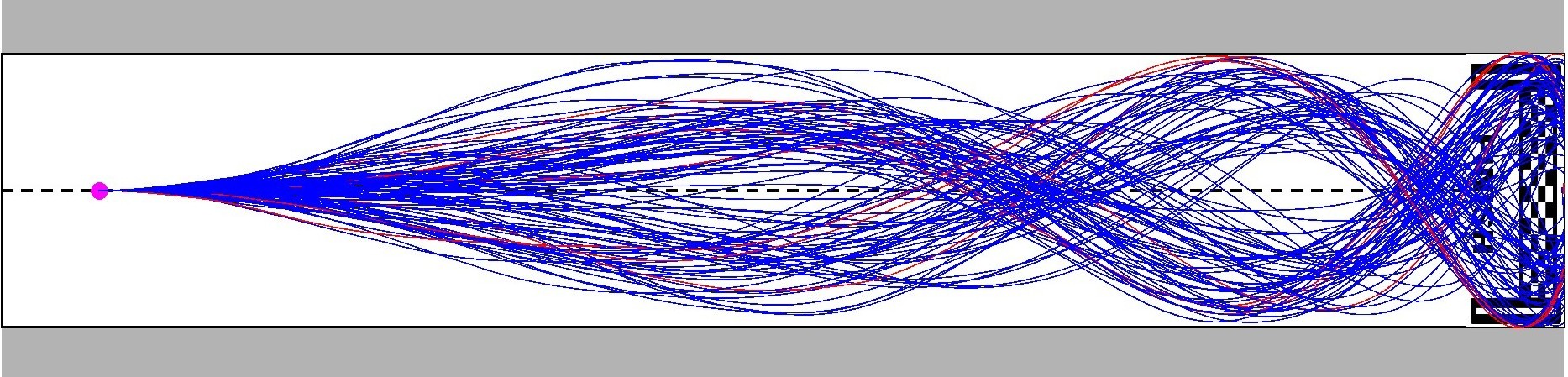} \\
\!\!\!\!\!(a) No attack, $P^{\text{crash}} \approx 0$  &(b) Stealthy attack, $\lambda=3$, $P^{\text{crash}} \approx 0.07$ \\\\

\!\!\!\!\!\includegraphics[scale=0.18]{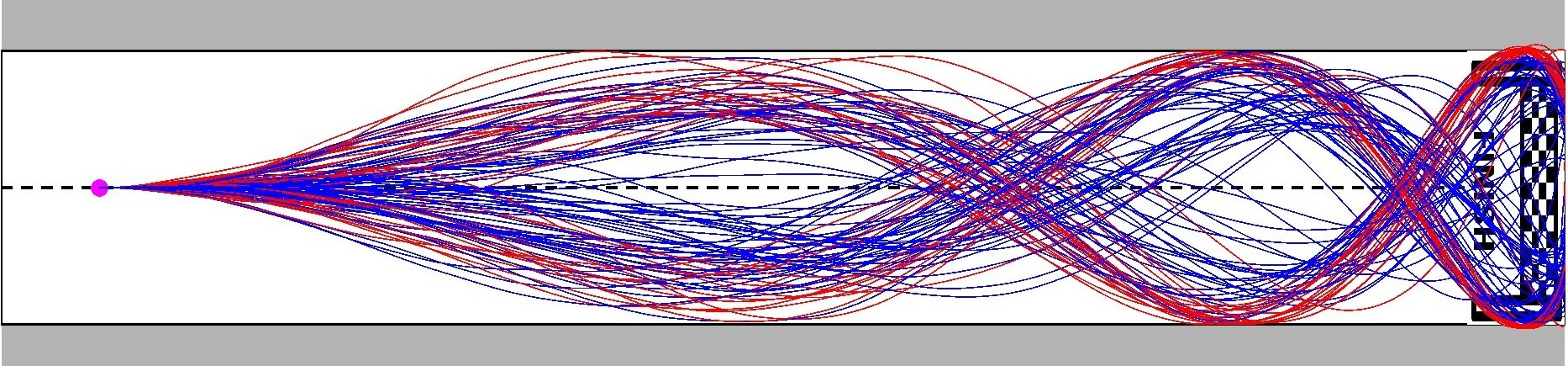} &\includegraphics[scale=0.18]{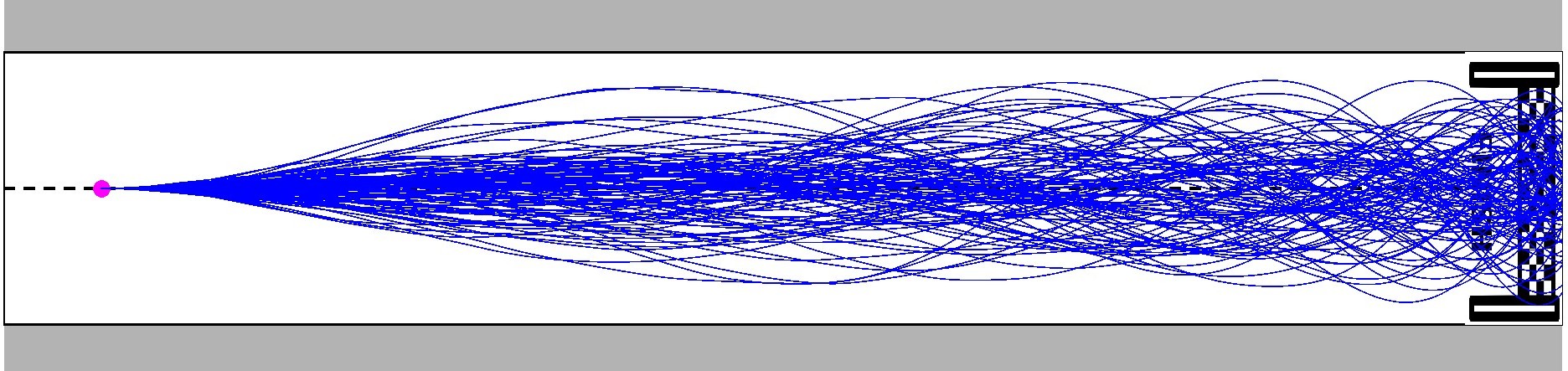} \\
\!\!\!\!\!(c) Stealthy attack, $\lambda=1.5$, $P^{\text{crash}} \approx 0.35$  &(d) Attack mitigation, $\lambda=1.5$, $P^{\text{crash}} \approx 0$ \\

       \end{tabular}
         \caption{Cruise control problem. A magenta dot on the left shows the start position, and the finish line is shown on the right (centered Y-intersection of the flag). $100$ sample paths generated without the attacker and with the attacker for two values of $\lambda$ are shown. The probability of crashed paths $P^{\text{crash}}$ are noted below each case.} 
         \label{Fig. car_worst-case attack}
 \end{figure*}
 
Consider an autonomously controlled car that receives control commands $u_t$ to remain on the road without veering off track. In this setup, a noise signal $w_t$ perturbs the control commands, and a stealthy attacker may hijack control by injecting an attack signal $v_t$, aiming to drive the car off the track. Consider the following car model:

\begin{equation*} \label{car model}
\begin{aligned}
    \begin{bmatrix}
    d{p}^x_t\\d{p}^y_t\\d{s_t}\\d{\delta_t}\\d\phi_t
    \end{bmatrix}\!=&
    \begin{bmatrix}
    {s_t}\cos{{\delta}_t}\\{s_t}\sin{{\delta}_t}\\0\\\frac{s_t\tan{\phi}_t}{L}
    \\0
    \end{bmatrix}\!dt +\begin{bmatrix}
    0 & 0\\0 & 0\\1 & 0\\0&0\\0 & 1
    \end{bmatrix} \!
    \begin{bmatrix}
    a_t\\
    \zeta_t
    \end{bmatrix}\!dt \\
    &\!\!\!\!+\begin{bmatrix}
    0 & 0\\0 & 0\\1 & 0\\0 & 0\\0 & 1
    \end{bmatrix}\begin{bmatrix}
    \sigma_t & 0\\
    0 & \nu_t 
    \end{bmatrix} \!
    \left(\begin{bmatrix}
    \Delta a_t\\
    \Delta\zeta_t
    \end{bmatrix}dt + d{w}_t
    \right)
\end{aligned}
\end{equation*}
where $p_t:=\begin{bmatrix}
  {p}^x_t & {p}^y_t  
\end{bmatrix}^\top$, ${s}_t$, ${\delta_t}$ and $\phi_t$ denote the position, speed, and the heading angle and the front wheel angle of the car, respectively, at time $t$. $L$ is the inter-axle distance. The control input $u_t:=\begin{bmatrix} a_t & \zeta_t \end{bmatrix}^\top$ consists of acceleration $a_t$ and the front wheel angular rate $\zeta_t$. $\theta_t:=\begin{bmatrix} \Delta a_t & \Delta \zeta_t \end{bmatrix}^\top$ is the attacker's bias signal, $d{w}_t\in\mathbb{R}^2$ is the white noise and $\sigma_t$, $\nu_t$ are the noise level parameters. As illustrated in Figure \ref{Fig. car_worst-case attack}, the car’s objective is to drive on the road without going off track until it crosses the finish line. Let $\mathcal{G}^{\text{center}}$ denote the center of the road, $\mathcal{G}^{\text{finish}}$ denote the finish line, and $\mathcal{X}^\text{unsafe}$ denote the off-the-track region shown in gray.

\subsubsection{Stealthy Attack Synthesis}
First, we will compute the worst-case attack signal assuming that the controller's policy $u_t$ is fixed and is known to the attacker. $u_t$ is designed to drive the car to the finish line without going off track. For the simulation, we set $L=0.05, \sigma_t=\nu_t=\sqrt{0.005}$, $\forall t$, $T=10$, $c_t({x_t, u_t}) = b_t\left[\left(\mathcal{G}^{\text{finish}}-{p}^x_t\right)^2 + \left(\mathcal{G}^{\text{center}}-{p}^y_t\right)^2\right] + \frac{1}{2}u_t^\top R_t u_t + \eta_t\mathds{1} _{{p_t}\in\mathcal{X}^{\text{unsafe}}}$ where $b_t = 0.02, R_t =4, \eta_t = 0.02, \forall t$. $\mathds{1} _{{p_t}\in\mathcal{X}^{\text{unsafe}}}$ is an indicator function which returns $1$ when the car goes off track and $0$ otherwise. In order to evaluate the optimal bias input \eqref{eq: theta_star} via Monte Carlo sampling, $10^4$ trajectories and a step size equal to $0.02$ are used. Figure \ref{Fig. car_worst-case attack}(a) shows the plot of $100$ trajectories when the car is under no attack i.e., the attack signal $\theta_t=0, \forall t$. The trajectories are color-coded; the red paths go off track, while the blue paths cross the finish line without going off track. Figures \ref{Fig. car_worst-case attack}(b) and \ref{Fig. car_worst-case attack}(c) show the plots of $100$ trajectories with the same color-coding scheme, when the system is under the attack for two values of $\lambda$. A lower value of $\lambda$ implies that the attacker cares less about being stealthy and more about driving the car off track. We also report $P^{\text{crash}}$, the percentage of paths that go off track. Under no attack (Figure \ref{Fig. car_worst-case attack}(a)), none of the paths go off track. On the other hand, trajectories under the adversary's attack drift off from the center of the road, and some of them go off track. Also, notice that for a lower value of $\lambda$, $P^{\text{crash}}$ is higher. 

\subsubsection{Attack Risk Mitigation}
Now, we will solve the controller's problem who is interested in mitigating the risk of stealthy attacks i.e., solve Problem \ref{prob: minimax_KL}. In order to use the path integral control to solve Problem \ref{prob: risk-sensitive control} or \ref{prob: game}, it is necessary to find a constant $\alpha>0$ (by Assumption \ref{Assumption: linearity}) such that 
\begin{equation*}
    \begin{bmatrix}
    \sigma_t & 0\\
    0 & \nu_t 
    \end{bmatrix}\begin{bmatrix}
    \sigma_t & 0\\
    0 & \nu_t 
    \end{bmatrix}^\top = \alpha\left(R_t^{-1} - \frac{1}{\lambda}\begin{bmatrix}
    \sigma_t & 0\\
    0 & \nu_t 
    \end{bmatrix}\begin{bmatrix}
    \sigma_t & 0\\
    0 & \nu_t 
    \end{bmatrix}^\top\right).
\end{equation*}
We solve Problem \ref{prob: game} and evaluate the saddle-point policies \eqref{eq: u_star_game}, \eqref{eq: theta_star_game} via path integral control. We use $10^4$ Monte Carlo trajectories and a step size equal to $0.02$. Figure \ref{Fig. car_worst-case attack}(d) shows the plots of $100$ sample trajectories generated using synthesized saddle-point policies $(u^*_t, \theta^*_t)$ for $\lambda = 1.5$. In Figures \ref{Fig. car_worst-case attack}(b) and \ref{Fig. car_worst-case attack}(c), the controller is unaware of the attacker. However, in Figure \ref{Fig. car_worst-case attack}(d), the controller is aware of the attacker and designs an attack-mitigating policy to combat the potential attacks. As we observe, under the attack mitigating policy, the controller is able to cross the finish line without going off-track.

\section{Conclusion and Future Work}\label{sec:conclusion}
We presented a framework for understanding and counteracting stealthy attacks in continuous-time, nonlinear cyber-physical systems. By representing the trade-off between remaining covert and degrading system performance through a Kullback–Leibler (KL) divergence measure of stealthiness, we formulated both (i) a KL control problem to characterize the attacker’s worst-case policy, and (ii) a minimax KL control problem to design a controller that mitigates the risk of such attacks. Key to our approach is a simulator-driven path integral control method that uses forward Monte Carlo simulations rather than closed-form models, thus remaining viable even for high-dimensional or complex systems. Numerical experiments on a unicycle navigation and cruise control illustrated that the attacker can indeed degrade the system performance while maintaining low detectability, yet the controller can act against the attack signals through risk-aware policy design.\par
Going forward, our path integral–based methodology can be extended to broader settings, such as partial observability or more general cost functions, and can accommodate advanced learning-based models that serve as ``digital twins" of physical processes. This work thus provides both a theoretical and computational foundation for tackling stealthy attack synthesis and mitigation in modern CPS. We also plan to conduct the sample complexity analysis of the path integral approach to solve both the KL control and the minimax KL control problems.

{\appendix}
Consider the following hypothesis testing problem on a sample path $v_t, 0\leq t \leq 1$:
\begin{align*}
H_0 :& \quad dv_t=dw_t, \; v_0=0 \\
H_1 :& \quad dv_t=\sigma dw_t, \; v_0=0, \; \sigma=1.1.
\end{align*}
That is, under $H_0$, $v_t$ follows the law of the standard Brownian motion $w_t$, whereas under $H_1$, $v_t$ is slightly ``noisier." Fig.~\ref{fig:h0h1} shows ten independent sample paths of $v_t$ generated under each hypothesis.

\begin{figure}[h]
\includegraphics[scale=0.3]{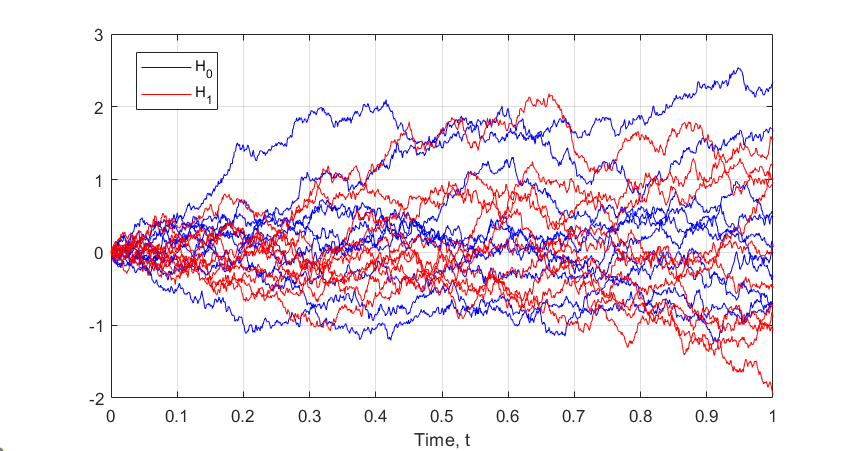}
\caption{Sample paths under $H_0$ and $H_1$.}
\label{fig:h0h1}
\end{figure}

Suppose the detector's task is to determine $H_0$ vs. $H_1$ based on discrete-time measurements $v_{t_k}, t_k=hk$, $k=1,2,\cdots, K(=1/h)$ where $h$ is the discretization step size.
If the underlying model is $dv_t=\sigma dw_t$, then the detector observes $K=(1/h)$ i.i.d. samples of increments:
\[
y_k:=v_{t_k}-v_{t_{k-1}}
\!\!\!\stackrel{\text{i.i.d.}}{\sim}\!\! \mathcal{N}(0, \sigma^2 h), \;\; k=1, 2, \cdots, K(=1/h).
\]
If we denote $P_0=\mathcal{N}(0,h)$ and $P_1=\mathcal{N}(0,\sigma^2 h)$, the likelihood ratio is
\begin{align*}
    \frac{dP_1}{dP_0}(y_1, \cdots, y_K)=&\prod_{k=1}^K \frac{\frac{1}{\sqrt{2\pi \sigma^2 h}}\exp\left(-\frac{y_k^2}{2\sigma^2 h}\right)}{\frac{1}{\sqrt{2\pi h}}\exp\left(-\frac{y_k^2}{2 h}\right)}\\
=&
\frac{1}{\sigma^K}\exp\left[\frac{1}{2h}\left(1-\frac{1}{\sigma^2}\right)\sum_{k=1}^K y_k^2\right].
\end{align*}
Thus, the acceptance region for $H_1$ under the Neyman-Pearson test is
\begin{equation}
\frac{1}{\sigma^K}\exp\left[\frac{1}{2h}\left(1-\frac{1}{\sigma^2}\right)\sum_{k=1}^K y_k^2\right] \geq \tau
 \label{eq:h1_region}
\end{equation}
where $\tau$ is a threshold.

Let us represent the probability of type-I error (false alarm) by $\alpha$ and the probability of type-II error (failure of detection) by $\beta$. In this example, both $\alpha$ and $\beta$ can be computed analytically. 
The false alarm rate $\alpha$ is the probability of the event \eqref{eq:h1_region} when $y_k\stackrel{\text{i.i.d.}}{\sim} \mathcal{N}(0, h)$.
Notice that \eqref{eq:h1_region} is equivalent to
\[
\sum_{k=1}^K \left(\frac{y_k}{\sqrt{h}}\right)^2 \geq \frac{2\sigma^2}{\sigma^2-1}(K\log \sigma + \log \tau)
\]
and $\sum_{k=1}^K \left(\frac{y_k}{\sqrt{h}}\right)^2\sim \chi^2(K)$ (the $\chi^2$-distribution with $K$ degrees of freedom).
Since the CDF of $\chi^2(K)$ is
\[
F(x)=P\left(\frac{x}{2},\frac{K}{2}\right)=\frac{1}{\Gamma\left(\frac{K}{2}\right)}\int_0^{x/2} t^{K/2-1}e^{-t}dt
\]
where $\Gamma(a)$ is the Gamma function and $P(x,a)$ is the regularized lower incomplete Gamma function, we have
\begin{align}
\alpha =& 1-P\left(\frac{\sigma^2}{\sigma^2-1}(K\log \sigma + \log \tau), \frac{K}{2}\right)\nonumber\\ =&Q\left(\frac{\sigma^2}{\sigma^2-1}(K\log \sigma + \log \tau), \frac{K}{2}\right)
 \label{eq:alpha}
\end{align}
where $Q(x,a)$ is the regularized upper incomplete Gamma function.

On the other hand, the detection failure rate $\beta$ is the probability of the event
\[
\sum_{k=1}^K \left(\frac{y_k}{\sqrt{\sigma^2 h}}\right)^2 < \frac{2}{\sigma^2-1}(K\log \sigma + \log \tau)
\]
when $y_k\stackrel{\text{i.i.d.}}{\sim} N(0, \sigma^2 h)$.
Since $\sum_{k=1}^K \left(\frac{y_k}{\sqrt{\sigma^2 h}}\right)^2\sim \chi^2(K)$, we have
\begin{equation}
\beta = P\left(\frac{1}{\sigma^2-1}(K\log \sigma + \log \tau), \frac{K}{2}\right). \label{eq:beta}
\end{equation}
Using \eqref{eq:alpha} and \eqref{eq:beta}, the trade-off curve between $\alpha$ and $\beta$ can be plotted by varying the parameter $\tau$.

Fig.~\ref{fig:alpha_beta} shows the $\alpha$--$\beta$ trade-off curve when the discretization step size is set to be $h=1/100, 1/200$ and $1/300$. It shows that the trade-off curves are drastically different if different step sizes are chosen. It can be numerically verified that both $\alpha$ and $\beta$ can be made arbitrarily small simultaneously by taking $h\searrow 0$.

\begin{figure}[h]
\centering
\includegraphics[width = 0.45\textwidth]{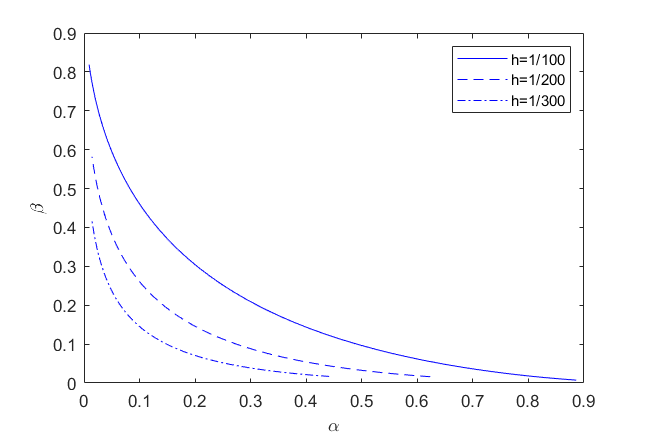}
\caption{$\alpha$--$\beta$ trade-off curves.}
\label{fig:alpha_beta}
\end{figure}

This observation implies that the hypothesis testing problem formulated above is ill-posed -- that is, the detector who can observe a continuous-time sample path $v_t, 0\leq t \leq 1$ (and hence can choose an arbitrarily small $h$) can correctly estimate which model has generated the path with probability one.
Mathematically, this reflects the fact that the probability models $H_0$ and $H_1$ are mutually singular. 
Specifically, let $\mu_0$ and $\mu_1$ be the probability measures under which $v_t$ in $H_0$ and $H_1$ are the standard Brownian motions, respectively. It can be shown that $\mu_0$ and $\mu_1$ are mutually singular, i.e., the Radon-Nikodym derivative $\frac{d\mu_1}{d\mu_0}$ is ill-defined.

\bibliographystyle{IEEEtran}
\bibliography{references}
\end{document}